\newcommand{\N}{\mathbb{N}}
\newcommand{\R}{\mathbb{R}}
\newcommand{\ed}{R_{\varepsilon_0, \varepsilon_1}}
\newcommand{\ux}{\underline{x}}
\newcommand{\Sd}{\mathbb{S}^{d-1}}
\newcommand{\Mint}{M_{\mathrm{int}}}
\newcommand{\gint}{g_{\mathrm{int}}}
\newcommand{\Mmax}{M_{\mathrm{max}}}
\newcommand{\gmax}{g_{\mathrm{max}}}
\begin{document}

\numberwithin{equation}{section}
\newtheorem{theorem}[equation]{Theorem}
\newtheorem{remark}[equation]{Remark}
\newtheorem{claim}[equation]{Claim}
\newtheorem{lemma}[equation]{Lemma}
\newtheorem{definition}[equation]{Definition}
\newtheorem{premiss}[equation]{Premiss}
\newtheorem{corollary}[equation]{Corollary}
\newtheorem{proposition}[equation]{Proposition}
\newtheorem*{theorem*}{Theorem}

\title{On the proof of the $C^0$-inextendibility of the Schwarzschild spacetime}
\author{Jan Sbierski\thanks{Department for Pure Mathematics and Mathematical Statistics, University of Cambridge,
Wilberforce Road,
Cambridge,
CB3 0WA,
United Kingdom}}

\maketitle

\begin{abstract}
This article presents a streamlined version of the author's original proof of the $C^0$-inextendibility of the maximal analytic Schwarzschild spacetime. Firstly, we deviate from the original proof by using the result, recently established in collaboration with Galloway and Ling, that given a $C^0$-extension of a globally hyperbolic spacetime, one can find a timelike geodesic that leaves this spacetime. This result much simplifies the proof of the inextendibility through the exterior region of the Schwarzschild spacetime. Secondly, we give a more flexible and shorter argument for the inextendibility through the interior region. Furthermore, we present a small new structural result for the boundary of a globally hyperbolic spacetime within a $C^0$-extension which serves as a new and simpler starting point for the proof.
\end{abstract}

\section{Introduction}
This note presents a streamlined proof of the following
\begin{theorem}\label{MainThm}
The maximal analytic extension $(\Mmax, \gmax)$ of the Schwarzschild spacetime is $C^0$-inextendible.
\end{theorem}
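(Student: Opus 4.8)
The plan is to argue by contradiction. Suppose $(\Mmax, \gmax)$ admits a $C^0$-extension, that is, an isometric embedding $\iota \colon \Mmax \hookrightarrow (\tilde M, \tilde g)$ into a connected Lorentzian manifold with continuous metric such that the topological boundary $\partial \iota(\Mmax) \subseteq \tilde M$ is non-empty. The first task is to extract from this extension a single, well-controlled boundary point together with a geodesic running into it. Using the structural result for boundaries of globally hyperbolic spacetimes announced in the abstract, I would select a point $p \in \partial \iota(\Mmax)$; then, crucially exploiting that $(\Mmax, \gmax)$ is globally hyperbolic, I would invoke the result established with Galloway and Ling to produce a timelike geodesic $\gamma \colon [0,1) \to \Mmax$ that leaves the spacetime, i.e.\ for which $\iota \circ \gamma$ extends continuously to $s = 1$ with $\iota \circ \gamma(1) = p$. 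Since $p$ is an interior point of the manifold $\tilde M$ and $\tilde g$ is continuous, $\gamma$ must reach $p$ in \emph{finite} affine parameter, so $\gamma$ is an inextendible timelike geodesic of finite affine length inside $\Mmax$.

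Next I would locate $\gamma$ within the Kruskal manifold. Tracking the area-radius $r$ along $\gamma$ via its conserved energy and angular momentum, and using the global structure of $(\Mmax, \gmax)$, one knows that every inextendible timelike geodesic is either complete or runs into the curvature singularity $\{r = 0\}$ in finite proper time, while reaching the asymptotically flat ends $\{r \to \infty\}$ costs infinite proper time. In particular, a timelike geodesic confined to an exterior region is necessarily complete: bounded orbits persist for infinite proper time, and unbounded ones escape to $\{r \to \infty\}$, which again takes infinite proper time. Hence the finite-length geodesic $\gamma$ cannot be confined to an exterior region, and no boundary point of $\iota(\Mmax)$ is approached through the exterior; this is precisely where having a leaving \emph{geodesic}, rather than merely a leaving causal curve, streamlines the original argument. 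It follows that $\gamma$ terminates at the singularity and that $p$ is reached as $r \to 0$.

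The main obstacle is the interior case, in which $p$ is reached as $r \to 0$ and in which the usual mechanism behind the singularity --- curvature blow-up --- is \emph{unavailable}, since a $C^0$-extension controls no derivatives of the metric. The argument must therefore be purely metric. Writing the interior metric as
\[ \gint = -\bigl(\tfrac{2m}{r}-1\bigr)^{-1}\, dr^2 + \bigl(\tfrac{2m}{r}-1\bigr)\, dt^2 + r^2\, d\Omega^2, \]
the decisive feature is that as $r \to 0$ the symmetry spheres shrink while the spacelike $t$-direction is stretched without bound, its coefficient $\bigl(\tfrac{2m}{r}-1\bigr)$ diverging. The plan is to play this stretching against a local comparison near $p$: fixing a coordinate ball $U \ni p$ on which $\tilde g$ is uniformly comparable to a constant Minkowski metric, continuity of $\tilde g$ bounds the $\tilde g$-length of any curve lying in $U$ from \emph{above} in terms of the small coordinate size of $U$, whereas the divergence of $\bigl(\tfrac{2m}{r}-1\bigr)$ should force suitable spacelike curves running into the boundary near $p$ --- for instance images, under the $t$-translation isometry of $\Mint$, of the data carried by $\gamma$ --- to have $\tilde g$-length bounded from \emph{below} by a quantity that blows up as $r \to 0$. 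Reconciling the two bounds yields the contradiction, and with it the theorem. The delicate point, and the step I expect to demand the most care, is the quantitative bookkeeping: one must arrange the family of spacelike curves (equivalently, the boosted geodesics together with the slices joining them) so that their lengths genuinely diverge while their images remain confined to the single neighbourhood $U$ on which the Minkowski comparison is valid.
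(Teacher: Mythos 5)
Your opening moves coincide with the paper's: invoke the Galloway--Ling result to obtain a future inextendible timelike geodesic with an endpoint on the boundary, observe that such a geodesic has finite proper time (the paper proves this by reparametrising by $x_0$ in a near-Minkowskian chart, where the cone comparison bounds the spatial velocity, so the length integral is finite), and combine this with the dichotomy that inextendible timelike geodesics in $\Mmax$ are either future complete or satisfy $r \to 0$ (Theorem \ref{ThmFact}). This correctly disposes of the exterior case, and your overall plan for the interior case --- play the divergence of the $dt^2$-coefficient as $r \to 0$ against a uniform Minkowski comparison in a single chart --- is also the paper's idea. But the step you defer as ``the quantitative bookkeeping'' is in fact the mathematical core of the proof, and your sketch contains no mechanism for it. Nothing in the continuity of $\tilde g$ forces the long spacelike slices (your $t$-translates at small $r$) to have images inside the fixed chart $\tilde U$: a priori, curves in $I^+\big(\gamma(-\mu),\Mint\big)$ could leave $\tilde U$ and the extension could attach the boundary so that this future ``wraps around'' elsewhere in $\tilde M$. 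The paper's Step 1 establishes precisely this confinement, $\iota\big(I^+(\gamma(-\mu),\Mint)\big) \subseteq \tilde\varphi^{-1}\big(\{x_0 < f(\underline{x})\}\big)$, and its proof requires the \emph{future one-connectedness} of $(\Mint,\gint)$ (Proposition \ref{IntFutCon}): timelike homotopies with fixed endpoints, a compactness argument (the set $K$, shown compact via Lemma \ref{BoundsOnReach}, timelike separating two regions), and the structural result (Proposition \ref{FundProp}) that the future boundary is locally an achronal Lipschitz graph. The abstract explicitly notes that the proof ``still requires the future one-connectedness of the interior''; this ingredient is entirely absent from your proposal and is not dispensable.

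A second, related gap: as stated, your two length bounds do not clash. A lower bound on the $\tilde g$-length of \emph{particular} spacelike curves is not contradicted by an upper bound on the lengths of \emph{other} curves in $U$; both bounds must apply to the same quantity. The paper arranges this by working with the intrinsic distance $d_{\Sigma_n}$ inside the Cauchy hypersurfaces $\Sigma_n = \{r = \tfrac{1}{n}\}$ of $\Mint$: from below, $d_{\Sigma_n}(p_n,q_n) = 2\lambda\sqrt{2m\,n^{d-2}-1} \to \infty$, because \emph{every} curve in $\Sigma_n$ joining two points of fixed $t$-separation pays the stretched $dt^2$ cost; from above, once Step 1 confines $I^+\big(\gamma(-\mu),\Mint\big)\cap\Sigma_n$ to the chart, $\Sigma_n$ meets the chart in a graph $x_0 = h(\underline{x})$ whose slope is uniformly bounded because a Cauchy hypersurface has nowhere-timelike tangent spaces, so the coordinate straight line within the graph yields $d_{\Sigma_n}(p_n,q_n) \leq \varepsilon_1 d^{\nicefrac{3}{2}}\sqrt{C_{\overline{g}}}$ uniformly in $n$. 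To turn your sketch into a proof you would need to supply both structures: the one-connectedness-based confinement and the reformulation of the comparison as a uniform bound on distances within a family of Cauchy hypersurfaces.
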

The original proof of this theorem given in \cite{Sbie15} was divided into three steps: Firstly, one assumes that there is a $C^0$-extension of the Schwarzschild spacetime $(\Mmax, \gmax)$ and then shows that there is a (without loss of generality) future directed and future inextendible timelike curve $\gamma$ in the Schwarzschild spacetime that can be extended to the future as a timelike curve in the $C^0$-extension. Secondly, one shows that the curve $\gamma$ cannot lie in the exterior region of the Schwarzschild black hole and thus leave the spacetime through timelike or null infinity. Here, one uses the future one-connectedness\footnote{Recall that a time-oriented Lorentzian manifold is called \emph{future one-connected} iff any two timelike curves with the same endpoints are homotopic with fixed endpoints \underline{via timelike curves}; see also Definition 2.13 of \cite{Sbie15}.} of the exterior region together with its future divergence\footnote{This is the property, that for any future directed inextendible timelike curve $\gamma : [0,1) \to M$ one has $d(\gamma(0), \gamma(s)) \to \infty $ for $s \to 1$, where $d$ is the Lorentzian distance function.}. Especially the property of future one-connectedness is quite tedious to establish. Thirdly, one shows that the timelike curve $\gamma$ can neither be ultimately contained in the interior of the Schwarzschild black hole and thus leave through the curvature singularity at $r = 0$. Here, one establishes that the \emph{spacelike diameter}\footnote{This is a geometric quantity at the $C^0$-level of the metric.} of a certain spherically symmetric subset `touching' the curvature singularity, is infinite. Covering this spherically symmetric subset by a finite number of charts in which the metric is suitably controlled, and which are obtained from the assumed $C^0$-extension, one obtains the contradiction that the spacelike diameter should be finite. For the construction of the finitely many charts one in particular exploits the spherical symmetry of the Schwarzschild spacetime. 

The strategy of the proof given in this note deviates as follows from the original strategy: The first step is replaced by a result established in \cite{GalLinSbi17}, which states that given a $C^0$-extension of a globally hyperbolic Lorentzian manifold $(M,g)$ there exists a (without loss of generality) future directed and future inextendible timelike \emph{geodesic} in $M$ which has a future limit point on the boundary of $M$ in the extension. Applying this result to the maximal analytic Schwarzschild spacetime, one again distinguishes the two cases of the timelike geodesic leaving through timelike/null infinity or through the curvature singularity at $r=0$. The first case is now easily ruled out using the completeness of geodesics approaching timelike or null infinity -- without the need of establishing the future one-connectedness of the exterior. For the second case we present a new, \emph{local} version of the obstruction to $C^0$-extensions through $r=0$, which does not rely on the infinitude of the spacelike diameter of the \emph{global} spherically symmetric set constructed in \cite{Sbie15} (which has to be covered by multiple charts), but transfers the idea locally to a single chart. This new argument is more flexible and can be more easily adapted to other situations; this part of the argument does not rely on spherical symmetry. We should also point out that the proof overall still requires the future one-connectedness of the interior.

For the sake of brevity this note is not self-contained. We will often refer the reader to definitions and results from \cite{Sbie15}. We begin by presenting a  small new structural result for $C^0$-extensions which is convenient to have at one's disposal.

\section{A structural result for $C^0$-extensions}

This section establishes an easy yet fundamental result on the structure of $C^0$-extensions. We implicitly assume that all manifolds are connected. We then recall from Section 2.3 of \cite{Sbie15} that a \emph{$C^0$-extension} of a  smooth Lorentzian manifold $(M,g)$ consists of a  Lorentzian manifold $(\tilde{M}, \tilde{g})$ of the same dimension as $M$, $\tilde{g} \in C^0$, together with an isometric embedding $\iota : M \hookrightarrow \tilde{M}$, such that $\iota(M)$ is a proper subset of $\tilde{M}$. If no such $C^0$-extension exists, we say that $(M,g)$ is \emph{$C^0$-inextendible}. Finally, we recall that it is convenient for our purposes to define a \emph{timelike curve} to be a piecewise smooth curve such that, where defined, the tangent vector is everywhere timelike and at the points of discontinuity, the left and right sided tangent vectors are timelike and lie in the same connectedness component of the timelike cone. We recall from Proposition 2.6 of \cite{Sbie15} that, with this definition of timelike curve, the timelike future and past of a point in a time-oriented Lorentzian manifold remains open even for merely continuous metrics. 

In the following we define the future (past) boundary of a time-oriented Lorentzian manifold with respect to a $C^0$-extension (see also Definition  2.1 in \cite{GalLin16}).
\begin{definition}
Let $(M,g)$ be a time-oriented Lorentzian manifold with a continuous metric and $\iota : M \hookrightarrow \tilde{M}$ a $C^0$-extension of $M$. The \emph{future boundary of $M$} is the set $\partial^+\iota(M) $ consisting of all points $p \in \tilde{M}$ such that there exists a timelike curve $\tilde{\gamma} : [-1,0] \to \tilde{M}$ such that $\mathrm{Im}(\tilde{\gamma}|_{[-1,0)}) \subseteq \iota(M)$, $\tilde{\gamma}(0) = p \in \partial \iota(M)$, and $\iota^{-1} \circ \tilde{\gamma}|_{[-1,0)}$ is future directed in $M$.
\end{definition}
Clearly we have $\partial^+\iota(M) \subseteq \partial \iota(M)$. The past boundary $\partial^- \iota(M)$ is defined analogously. Moreover, it follows from Lemma 2.17 of \cite{Sbie15} that  $\partial^+\iota(M) \cup \partial^- \iota(M) \neq \emptyset$.

Figure \ref{FigPosStr} below gives a first idea of possible boundary structures. In the left Penrose-like diagram the segment from $q$ to $r$, closed at $q$, open at $r$, consists of future boundary points. The segment from $p$ to $r$, closed at $p$, open at $r$, consists of past boundary points. The boundary point $r$ lies neither in the future nor the past boundary. And finally all points of the open segment from $p$ to $q$ lie in the future as well as in the past boundary.

The right diagram shows a $C^0$-extension of a globally hyperbolic Lorentzian manifold. Again, the future and past boundaries are not disjoint. Moreover, the future (past) boundary is not achronal in the extension, nor in any small neighbourhood of the future boundary point $s$.

\begin{figure}[h]
  \centering
  \def\svgwidth{7cm}
    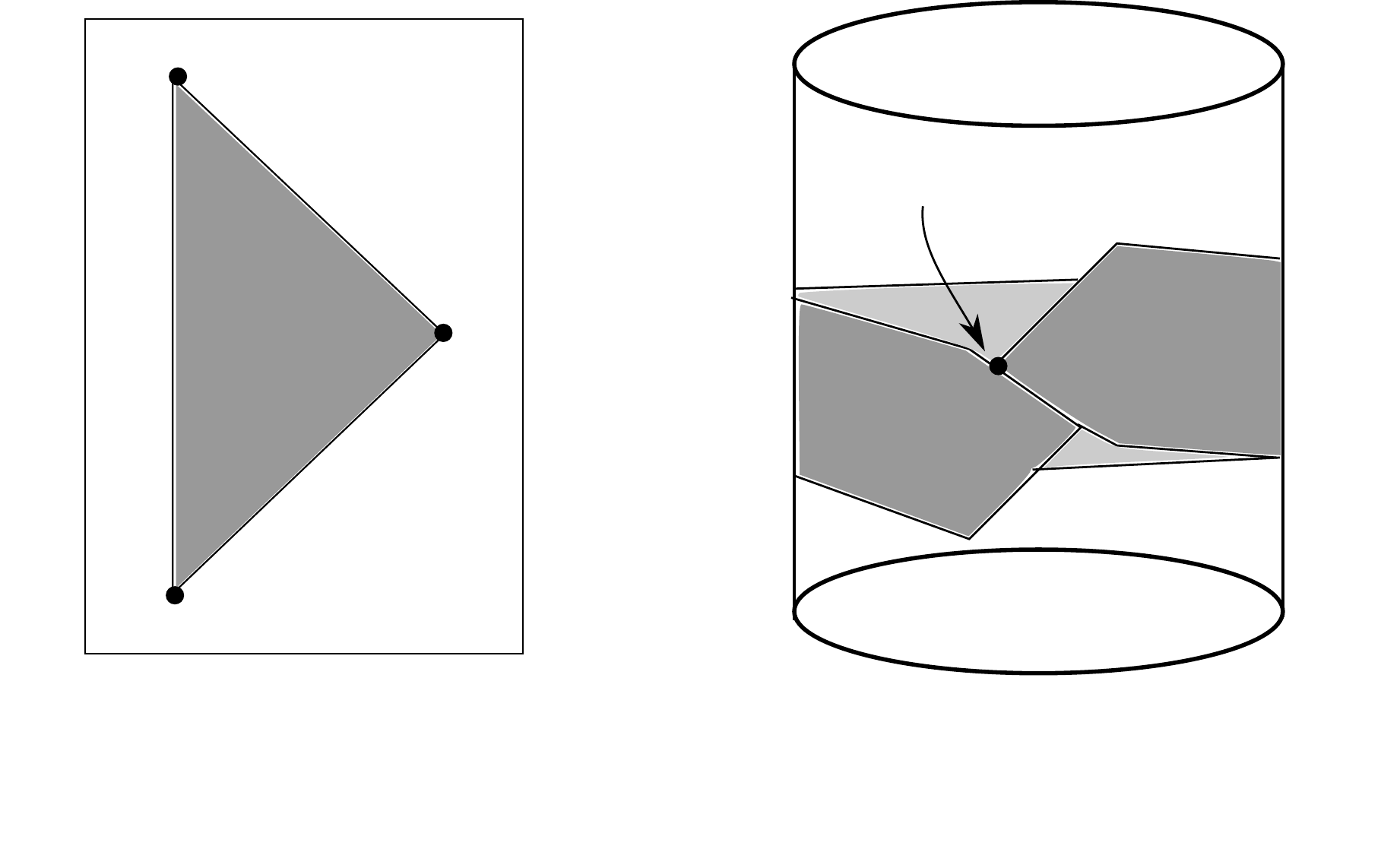
    \caption{Possible structures of the future and past boundaries} \label{FigPosStr}
\end{figure}

However, we have the following result for  globally hyperbolic $(M,g)$.

\begin{proposition}\label{FundProp}
Let $\iota : M \hookrightarrow \tilde{M}$ be a $C^0$-extension of a globally hyperbolic Lorentzian manifold $(M,g)$  and let $p \in \partial^+ \iota(M)$. For every $\delta >0$ there exists a chart $\tilde{\varphi} : \tilde{U} \to(-\varepsilon_0, \varepsilon_0) \times  (-\varepsilon_1, \varepsilon_1)^{d}$, $\varepsilon_0, \varepsilon_1 >0$ with the following properties
\begin{enumerate}[i)]
\item $p \in \tilde{U}$ and $\tilde{\varphi}(p) = (0, \ldots, 0)$
\item $|\tilde{g}_{\mu \nu} - m_{\mu \nu}| < \delta$, where $m_{\mu \nu} = \mathrm{diag}(-1, 1, \ldots , 1)$
\item There exists a Lipschitz continuous function $f : (-\varepsilon_1, \varepsilon_1)^d \to (-\varepsilon_0, \varepsilon_0)$ with the following property: 
\begin{equation}\label{PropF1}
\{(x_0,\underline{x}) \in (-\varepsilon_0, \varepsilon_0) \times (-\varepsilon_1, \varepsilon_1)^{d} \; | \: x_0 < f(\underline{x})\} \subseteq \tilde{\varphi}\big( \iota(M) \cap \tilde{U} \big)
\end{equation} and 
\begin{equation}\label{PropF2}
\{(x_0,\underline{x}) \in (-\varepsilon_0, \varepsilon_0) \times (-\varepsilon_1, \varepsilon_1)^{d}  \; | \: x_0 = f(\underline{x})\} \subseteq \tilde{\varphi}\big(\partial^+\iota(M)\cap \tilde{U}\big) \;.
\end{equation}
Moreover, the set on the left hand side of \eqref{PropF2}, i.e. the graph of $f$, is achronal in $(-\varepsilon_0, \varepsilon_0) \times  (-\varepsilon_1, \varepsilon_1)^{d}$.
\end{enumerate}
\end{proposition}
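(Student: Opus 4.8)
The plan is to construct the chart directly from the metric and from the timelike curve defining $p$ as a future boundary point, to read off $f$ from the local causal structure, and then to shrink the chart until all three properties hold.

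First I would secure i) and ii) by a routine argument: since $\tilde g$ is continuous, pick linear coordinates at $p$ in which $\tilde g_{\mu\nu}(p)=m_{\mu\nu}$ (Gram--Schmidt on a basis of $T_p\tilde M$), translate so that $\tilde\varphi(p)=0$, and use continuity of $\tilde g$ to pass to a small coordinate cube on which $|\tilde g_{\mu\nu}-m_{\mu\nu}|<\delta$. I would fix $\delta$ small enough to obtain a quantitative cone condition: there is $\kappa=\kappa(\delta)$ with $\kappa\to 0$ as $\delta\to 0$ such that every $\tilde g$-timelike vector $(v^0,\underline v)$ satisfies $|\underline v|<(1+\kappa)|v^0|$, while every vector with $|\underline v|<(1-\kappa)|v^0|$ is timelike; after possibly relabelling the time orientation, future-directed timelike curves then have strictly increasing $x_0$. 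With this in hand, achronality of a graph $\{x_0=f(\ux)\}$ in the cube is equivalent, up to the loss $\kappa$, to $f$ being Lipschitz: a Lipschitz constant $\le (1+\kappa)^{-1}$ forces achronality, and conversely achronality forces the Lipschitz constant $\le(1-\kappa)^{-1}$. So it suffices to produce $f$, Lipschitz with constant close to $1$, whose graph consists of future boundary points and under which $\iota(M)$ lies.

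Next I would produce a \emph{solid past region of $\iota(M)$ sitting just below $p$}. By definition of $\partial^+\iota(M)$ there is a future-directed timelike curve into $\iota(M)$ ending at $p$; its points $a$ lie below $p$ (decreasing $x_0$), in $\iota(M)$, with spatial coordinate tending to $0$. Fixing such an $a$ close to $p$, openness of $\iota(M)$ gives a coordinate ball about $a$, and using global hyperbolicity I would argue that for a sufficiently small cube the truncated past $I^-(a)\cap\tilde U$ is entirely contained in $\iota(M)$: the only way to leave $\iota(M)$ going to the past from $a$ would be to cross a \emph{past} boundary point, and such points are excluded in a small enough neighbourhood of a point which — for globally hyperbolic $M$ — lies only in the future boundary. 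Choosing the aspect ratio $\varepsilon_0\gtrsim\varepsilon_1$ so that downward cones from the bulk of the cube reach this region, this yields a past set $R\subseteq\tilde\varphi(\iota(M)\cap\tilde U)$ whose upper boundary I write as a graph $x_0=f(\ux)$ through the origin. By construction the region below the graph is $R$, giving \eqref{PropF1}; and each graph point is a future endpoint of a future-directed timelike curve lying in $\iota(M)$ (coming up through $R$), hence lies in $\partial^+\iota(M)$, giving \eqref{PropF2}. Being the topological boundary of a past set inside the cube, the graph is automatically achronal and Lipschitz by the standard achronal-boundary theorem.

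The hard part is exactly the step folded into the previous paragraph: showing that this local past region closes off \emph{on the future boundary} as a genuine graph, rather than its boundary turning timelike inside $\tilde U$. This is where global hyperbolicity of $M$ is indispensable and cannot be dropped: the half-plane $\{t<2x\}\subset(\R^2,-dt^2+dx^2)$ has the property that every future boundary point carries a past-directed timelike curve into it, yet its future boundary is the timelike line $\{t=2x\}$, which is non-achronal in every chart, so no such $f$ can exist there. I would rule this out for globally hyperbolic $M$ by a limiting/concatenation argument: if the graph of $f$ contained two timelike-related future boundary points $A\ll B$, then, since the region below lies in $\iota(M)$ and $I^-$ is open even for continuous metrics (Proposition 2.6 of \cite{Sbie15}), one can push the connecting segment slightly downward into $\iota(M)$ and concatenate it with the approximating $\iota(M)$-curves at $A$ and $B$ to produce a causal curve in $M$ that accumulates on $\partial^+\iota(M)$ while staying in a compact subset of $\tilde M$, contradicting global hyperbolicity of $(M,g)$ (imprisonment of an inextendible causal curve, or the failure of a Cauchy surface to be met). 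Once achronality is established, the cone condition upgrades it to the desired Lipschitz bound, and a final shrinking of $\varepsilon_0,\varepsilon_1$, using $f(0)=0$ and the Lipschitz bound, keeps the graph inside the cube, completing the construction.
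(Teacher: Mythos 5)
Your construction has a genuine gap at its central step. The ``solid past region'' argument rests on the claim that, because $(M,g)$ is globally hyperbolic, $p$ lies only in the future boundary and past boundary points (and hence exits from $\iota(M)$ to the past) are excluded from a sufficiently small neighbourhood of $p$. This is false: the right diagram of Figure \ref{FigPosStr} (the point $s$) exhibits a $C^0$-extension of a \emph{globally hyperbolic} manifold in which $\partial^+\iota(M)$ and $\partial^-\iota(M)$ are not disjoint and the future boundary is non-achronal in every neighbourhood of $s$; equivalently, as the Remark following the paper's proof stresses, one cannot in general find a neighbourhood of $p$ disjoint from a Cauchy hypersurface $\Sigma$ of $M$. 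So the inclusion $I^-(a)\cap\tilde U\subseteq\iota(M)$ cannot be obtained by ``no past boundary points nearby'', and this is exactly the difficulty the paper's proof is built to circumvent: it defines $f(\ux)$ as the exit height of the vertical line over $\ux$ from $\tilde{\varphi}\big(\iota(I^+(\Sigma,M))\cap\tilde U\big)$ --- the image of a \emph{future set}, not of $\iota(M)$ itself --- and then runs the foliation-by-lines continuity argument of Figure \ref{FigCont} to keep every auxiliary curve inside $I^+(\Sigma,M)$. A secondary soft spot: if $R$ is the truncated past of the approaching curve, the lateral part of its upper boundary is generically a null cone lying in the \emph{interior} of $\iota(M)$, so your graph points need not lie on $\partial^+\iota(M)$ at all and \eqref{PropF2} would fail; defining $f$ via exit from the future set fixes this automatically, since a vertical (future-directed timelike) line can leave $I^+(\Sigma,M)$ only by leaving $\iota(M)$.

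Your fallback achronality argument also does not close. A timelike curve in $M$ that ``accumulates on $\partial^+\iota(M)$ while staying in a compact subset of $\tilde M$'' is no contradiction to global hyperbolicity of $(M,g)$: the defining curve $\gamma=\iota^{-1}\circ\tilde{\gamma}$ does precisely that, and imprisonment in a compact set is only forbidden in the manifold whose causality is invoked, whereas your compact set lies partly outside $\iota(M)$. The only workable contradiction --- the one the paper uses repeatedly --- is that the constructed curve is past-inextendible in $M$ \emph{and entirely contained in} $I^+(\Sigma,M)$, hence can never meet the Cauchy hypersurface $\Sigma$, contradicting $D^+(\Sigma)\supseteq I^+(\Sigma,M)$; your alternative reading (``failure of a Cauchy surface to be met'') is the right mechanism, but it only bites once the containment in $I^+(\Sigma,M)$ is secured, which is exactly what your proposal skips and what the paper's $\rho_\tau$-construction provides. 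Your remaining ingredients --- items i) and ii) via Gram--Schmidt and continuity, the cone comparison making achronality and a Lipschitz bound equivalent, and the half-plane $\{t<2x\}$ showing global hyperbolicity is indispensable --- are correct and consistent with the paper, but they are the routine parts; the heart of the proposition is the confinement to $I^+(\Sigma,M)$, and that is missing.
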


\begin{remark}\label{RemBoundary}
\begin{enumerate}
\item Note that the achronality of the graph of $f$ in particular entails that any past directed timelike curve starting below the graph of $f$ remains below the graph of $f$ (and thus in $\tilde{\varphi}\big(\iota(M)\cap \tilde{U}\big)$). This will be used often.
\item
As shown by the proof, given a Cauchy hypersurface $\Sigma$ of $M$ it can also be arranged that we have in fact $$\{(x_0,\underline{x}) \in (-\varepsilon_0, \varepsilon_0) \times (-\varepsilon_1, \varepsilon_1)^{d} \; | \: x_0 < f(\underline{x})\} \subseteq \tilde{\varphi} \big( \iota\big(I^+(\Sigma,M)\big)\cap \tilde{U}\big)\;,$$
\end{enumerate}
\end{remark}

\begin{proof}
Since $p \in \partial^+\iota(M)$, there exists a timelike curve $\tilde{\gamma} : [-1, 0] \to \tilde{M}$  with $\mathrm{Im} (\tilde{\gamma}|_{[-1,0)}) \subseteq \iota(M)$ and $\tilde{\gamma}(0)=p \in \partial\bigl(\iota(M)\big)$. After a possible reparametrisation of $\tilde{\gamma}$ there exists  a chart $\tilde{\varphi} : \tilde{U} \to (-\varepsilon_0, \varepsilon_0) \times (-\varepsilon_1, \varepsilon_1)^{d} $ such that (see also \cite{Sbie15}, Lemma 2.4)
\begin{enumerate}[i)]
\item $(\tilde{\varphi} \circ \tilde{\gamma})(s) = (s, 0, \ldots, 0)$ for $s \in (-\varepsilon_0, 0]$
\item $|\tilde{g}_{\mu \nu} - m_{\mu \nu}| < \delta$
\end{enumerate}
In particular $\delta >0$ can be chosen so small that $- \frac{1}{2} dx_0^2 + dx_1^2 + \ldots + dx_d^2 \prec \tilde{g}$, which means that the light cones of the first Lorentzian metric are contained in the light cones of $\tilde{g}$.

Let $\Sigma \subseteq M$ be a Cauchy hypersurface of $M$ and let $\gamma := \iota^{-1} \circ \tilde{\gamma}|_{[-1,0)}$. For $s_0 <0$ close enough to $0$ we have $\gamma(s_0) \in I^+(\Sigma, M)$. Since $I^+(\Sigma, M)$ is an open set, we can choose $\varepsilon_0, \varepsilon_1$ smaller, if necessary, with $-\varepsilon_0 < s_0 < 0$, such that $(-\varepsilon_0,s_0] \times (-\varepsilon_1, \varepsilon_1)^d \subseteq \tilde{\varphi} \big( \iota(I^+(\Sigma, M))\cap \tilde{U}\big)$. Furthermore, we can assume that $\varepsilon_1 < \frac{1}{2} \varepsilon_0$. Together with our choice of $\delta$ this implies in particular that for any point $\underline{x} \in (-\varepsilon_1, \varepsilon_1)^d$ the straight line connecting $(\frac{9}{10}\varepsilon_0, \underline{x})$ with $0$  is timelike.

We now define $f : (-\varepsilon_1, \varepsilon_1)^d \to (-\varepsilon_0, \varepsilon_0]$ by $$f(\underline{x}) = \sup \{ s_0 \in (-\varepsilon_0, \varepsilon_0) \; | \; (s, \underline{x}) \in \tilde{\varphi} \big( \iota(I^+(\Sigma, M))\cap \tilde{U}\big) \quad \forall s \in (-\varepsilon_0, s_0)\} \;.$$
We first show that for all $\underline{x} \in (-\varepsilon_1, \varepsilon_1)^d$ we have $f(\underline{x}) < \varepsilon_0$. 

Assuming the negation, there exists an $\underline{x}_0 \in (-\varepsilon_1, \varepsilon_1)^d$ with  $f(\underline{x}_0) = \varepsilon_0$. It follows that $(\frac{9}{10}\varepsilon_0, \underline{x}_0) \in \tilde{\varphi} \big( \iota(I^+(\Sigma, M))\cap \tilde{U}\big)$. Let $\sigma = (\sigma_0, \underline{\sigma}) : [0,1] \to (-\varepsilon_0, \varepsilon_0) \times (-\varepsilon_1, \varepsilon_1)^d$ denote the past directed timelike straight line from $\sigma(0) = (\frac{9}{10} \varepsilon_0, \underline{x}_0)$ to $\sigma(1) = 0$. We claim that $\sigma|_{[0,1)}$ maps into $\tilde{\varphi} \big( \iota(I^+(\Sigma, M))\cap \tilde{U}\big)$.

To see this we (partially) foliate the plane $$\{ (t,\underline{\sigma}(s)) \in (-\varepsilon_0, \varepsilon_0) \times (-\varepsilon_1, \varepsilon_1)^d \; | \; t \leq \sigma_0(s), s \in (0,1]\}$$ 
by the (closed) straight lines $\rho_\tau$ starting at $(\tau, \underline{x}_0)$ with slope $(\frac{9}{10}\varepsilon_0, -\underline{x}_0)$ and ending on $\sigma$.  It follows from the openness of $\tilde{\varphi} \big(\iota(M)\cap \tilde{U}\big)$ that there is a $-\frac{9}{10}\varepsilon_0 <\tau' < \frac{9}{10}\varepsilon_0$ such that 
\begin{equation}\label{InM}
\rho_{\tau} \subseteq \tilde{\varphi} \big( \iota(M) \cap \tilde{U}\big) \;\textnormal{ holds } \forall \tau \in (\tau', \frac{9}{10}\varepsilon_0)\;.
\end{equation}
Let $\tau_0 \in [-\frac{9}{10}\varepsilon_0, \frac{9}{10}\varepsilon_0)$ be the infimum over all  such $\tau'$ with the property \eqref{InM}. We first remark that for $\tau \in (\tau_0, \frac{9}{10}\varepsilon_0)$ we have $\rho_\tau \subseteq \tilde{\varphi}\big(\iota(I^+(M, \Sigma))\cap \tilde{U}\big)$. Assuming that $\tau_0 > -\frac{9}{10}\varepsilon_0$, there exists a point $q$ on $\rho_{\tau_0}$ with $q \notin \tilde{\varphi} \big( \iota(M)\cap \tilde{U}\big)$. This however allows us to construct a past directed timelike curve in $M$ which is past inextendible and completely contained in $I^+(\Sigma, M)$, see Figure \ref{FigCont}. This is a contradiction to $\Sigma$ being a Cauchy hypersurface of $M$, and thus we obtain $\tau_0 = -\frac{9}{10} \varepsilon_0$. This shows that $\sigma|_{[0,1)}$ maps into $\tilde{\varphi} \big( \iota(I^+(\Sigma, M))\cap \tilde{U}\big)$.

\begin{figure}[h]
  \centering
  \def\svgwidth{5cm}
    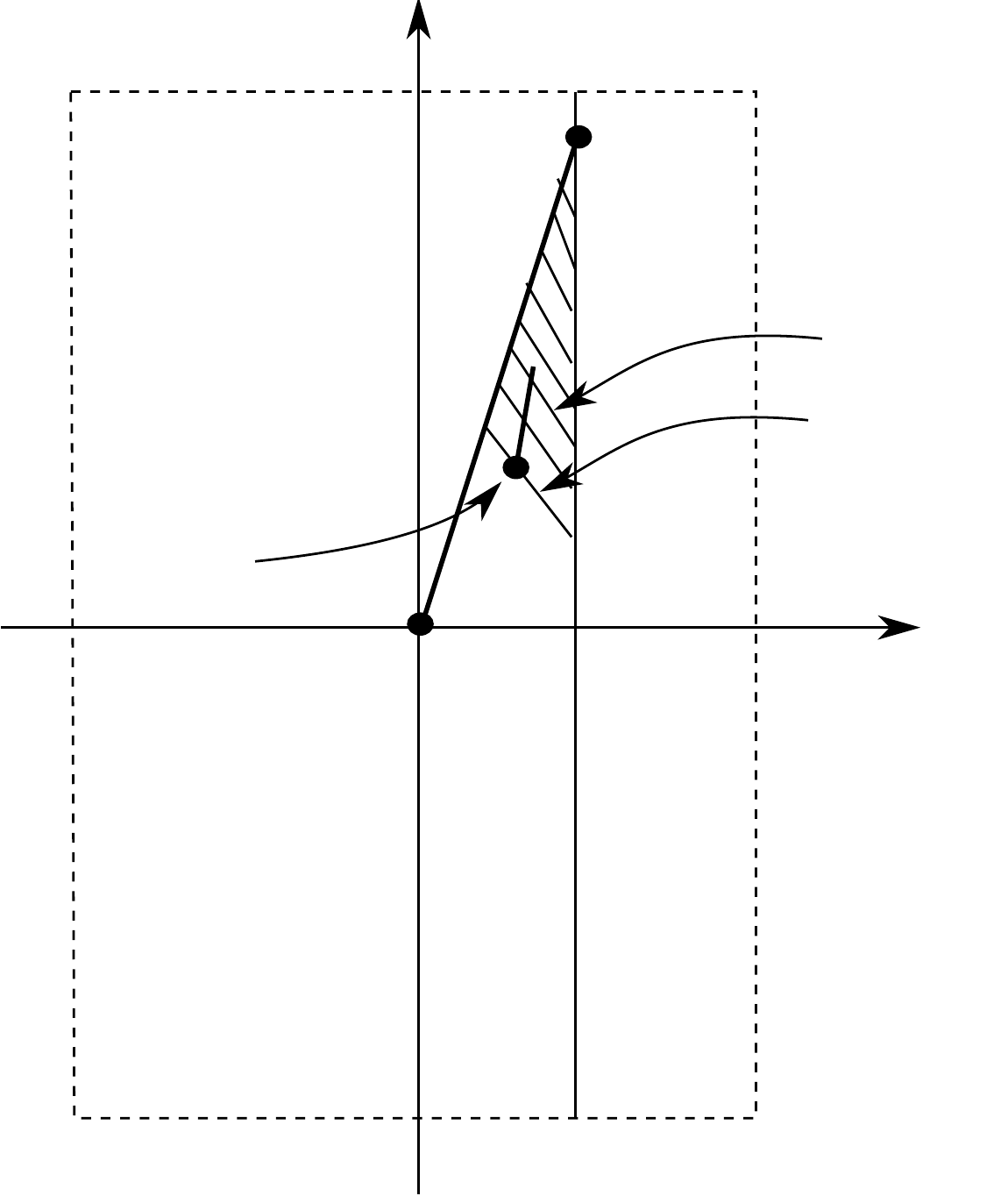
    \caption{The continuity argument} \label{FigCont}
\end{figure}

Again, the curve $\sigma|_{[0,1)}$ corresponds to a past directed timelike curve in $M$ which  is past inextendible and completely contained in $I^+(\Sigma, M)$ -- a contradiction. We thus obtain $f(\underline{x}) < \varepsilon_0$ for all $\underline{x} \in (-\varepsilon_1, \varepsilon_1)^d$. 

The properties \eqref{PropF1} and \eqref{PropF2} are immediate. To show that the function $f$ is continuous we use a similar construction as before: Let $\underline{x}_n \in (-\varepsilon_1, \varepsilon_1)^d$ be a sequence with $\underline{x}_n \to \underline{x}_\infty \in (-\varepsilon_1, \varepsilon_1)^d$. Assume that $f(\underline{x}_n) \not\to f(\underline{x}_\infty)$. Then there exists a $\mu >0$ and a subsequence $n_k$ such that $|f(\underline{x}_{n_k}) - f(\underline{x}_\infty)| > \mu$. After possibly extracting another subsequence we assume $f(\underline{x}_{n_k})  > f(\underline{x}_\infty) + \mu$; the case $f(\underline{x}_{n_k})  < f(\underline{x}_\infty) - \mu$  is dealt with analogously. For $k$ big enough we can connect $(f(\underline{x}_{n_k}) - \frac{\mu}{2}, \underline{x}_{n_k})$ to $(f(\underline{x}_\infty), \underline{x}_\infty)$ by a straight line that is timelike and past directed. Again, we call this line $\sigma$ and proceed from here along the same lines as before in order to show that $\sigma$ (with the past endpoint $(f(\underline{x}_\infty), \underline{x}_\infty)$  deleted) corresponds to a past directed and past inextendible timelike curve in $M$ which lies completely in $I^+(\Sigma, M)$. This is again a contradiction.
Indeed, this argument, together with the uniform bounds on the metric components $\tilde{g}_{\mu \nu}$ also shows that $f$ satisfies a Lipschitz condition. 

In order to show that the graph of $f$ is achronal in $(-\varepsilon_0, \varepsilon_0) \times (-\varepsilon_1, \varepsilon_1)^d$, we first observe that we have in fact shown
$$\{(x_0,\underline{x}) \in (\varepsilon_0, \varepsilon_0) \times (-\varepsilon_1, \varepsilon_1)^{d} \; | \: x_0 < f(\underline{x})\} \subseteq \tilde{\varphi} \big( \iota\big(I^+(\Sigma,M) \big)\cap \tilde{U}\big)\;.$$
Given now two points $q,r$ in the graph of $f$ with  $r \in I^+(q, (-\varepsilon_0, \varepsilon_0)\times (-\varepsilon_1, \varepsilon_1)^d)$, we can move $r$ down a bit in the $x_0$ coordinate to obtain a point which  lies in $ I^+(q, (-\varepsilon_0, \varepsilon_0)\times (-\varepsilon_1, \varepsilon_1)^d) \cap \tilde{\varphi}\big(\iota\big(I^+(\Sigma, M)\big) \cap \tilde{U}\big)$. 
It follows that we can again find a past directed and past inextendible timelike curve in $M$ which lies completely in $I^+(\Sigma,M)$. This finishes the proof.
\end{proof}

\begin{remark}
The proof would simplify drastically if one knew that one can find a neighbourhood of $p$ in $\tilde{M}$ that is disjoint from a Cauchy hypersurface of $M$. In general this is however not the case as illustrated by the point $s$ in Figure \ref{FigPosStr}  and we need to resort to the construction illustrated in Figure \ref{FigCont} in order to ensure that the past directed timelike curve $\sigma$ remains in $I^+(\Sigma, M)$ and does not intersect the Cauchy hypersurface $\Sigma$.
\end{remark}

\section{Auxiliary results}

The following theorem was proven in \cite{GalLinSbi17}.
\begin{theorem} \label{ThmGLS}
Let $(M,g)$ be a globally hyperbolic Lorentzian manifold and $\iota : M \hookrightarrow \tilde{M}$ a $C^0$-extension. Assume that $\partial^+ \iota(M) \neq \emptyset$. Then there exists a future directed timelike geodesic $\tau : [-1,0) \to M$ that is future inextendible in $M$ and such that $\lim_{s \to 0} (\iota \circ \tau)(s)$ exists and is contained in $\partial\iota(M)$.
\end{theorem}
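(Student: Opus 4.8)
The plan is to read the desired geodesic off the local normal form provided by Proposition \ref{FundProp}. Since $\partial^+\iota(M)\neq\emptyset$ by assumption, I fix $p\in\partial^+\iota(M)$ and apply Proposition \ref{FundProp} with $\delta>0$ small enough that the coordinate cones are controlled as in its proof; in particular $\delta$ is chosen so that every vertical segment (with $\underline{x}$ held fixed and $x_0$ varying) is timelike and $\partial_{x_0}$ is future directed. This yields a chart $\tilde\varphi:\tilde U\to(-\varepsilon_0,\varepsilon_0)\times(-\varepsilon_1,\varepsilon_1)^d$ sending $p$ to the origin, together with the Lipschitz function $f$ whose graph $\Gamma$ is achronal, which satisfies $\{x_0<f(\underline{x})\}\subseteq\tilde\varphi(\iota(M)\cap\tilde U)$ by \eqref{PropF1}, and which lies in $\partial^+\iota(M)$ by \eqref{PropF2}. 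I work throughout in these coordinates and suppress $\tilde\varphi$.

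The key step is to show that a past directed timelike $\tilde g$-geodesic issued from $p$ drops below $\Gamma$ immediately. Let $c$ be the $\tilde g$-geodesic with $c(0)=p=0$ and $\dot c(0)$ past directed timelike (e.g. close to $-\partial_{x_0}$), defined for $t\in[0,t_0]$ with $t_0>0$ so small that $c([0,t_0])\subseteq\tilde U$. For $t\in(0,t_0]$ the reversed segment is a future directed timelike curve from $c(t)$ to $p$, so $c(t)\in I^-\big(p,(-\varepsilon_0,\varepsilon_0)\times(-\varepsilon_1,\varepsilon_1)^d\big)$. I claim $c(t)$ lies in $\{x_0<f(\underline{x})\}$. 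Indeed, writing $c(t)=(c_0,\underline{c})$, suppose instead $c_0\geq f(\underline{c})$. If $c_0=f(\underline{c})$, then $c(t)\in\Gamma$ and $c(t)\ll p$ contradicts the achronality of $\Gamma$. If $c_0>f(\underline{c})$, then the vertical segment from the graph point $(f(\underline{c}),\underline{c})\in\Gamma$ up to $c(t)$ is future directed timelike, so $(f(\underline{c}),\underline{c})\ll c(t)\ll p$ yields $(f(\underline{c}),\underline{c})\ll p$ with both endpoints on $\Gamma$, again contradicting achronality. Hence $c(t)\in\{x_0<f(\underline{x})\}\subseteq\iota(M)$ for every $t\in(0,t_0]$ (and, by Remark \ref{RemBoundary}(1), the curve even stays below $\Gamma$ as long as it remains in $\tilde U$, which is all we shall need).

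It remains to reparametrise and transfer to $M$. Define $\tilde\tau:[-1,0]\to\tilde U$ by $\tilde\tau(s)=c(-t_0 s)$; this is a future directed timelike $\tilde g$-geodesic with $\tilde\tau(0)=p$ and $\tilde\tau([-1,0))\subseteq\iota(M)$. Since $\iota(M)$ is open in $\tilde M$ and $\iota$ is an isometric embedding, $\tau:=\iota^{-1}\circ\tilde\tau|_{[-1,0)}$ is a future directed timelike geodesic of $(M,g)$. By construction $(\iota\circ\tau)(s)=\tilde\tau(s)\to p\in\partial^+\iota(M)\subseteq\partial\iota(M)$ as $s\to0$, so the required limit exists and lies on the boundary. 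Finally, $\tau$ is future inextendible in $M$: if it had a future endpoint $q\in M$, then $\tilde\tau(s)=\iota(\tau(s))\to\iota(q)$, forcing $\iota(q)=p$ and hence $p\in\iota(M)$, which is impossible since $p\in\partial\iota(M)$ is disjoint from the open set $\iota(M)$.

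Thus the entire difficulty of the theorem is packaged into Proposition \ref{FundProp}: once the achronal graph $\Gamma$ and the inclusion \eqref{PropF1} are available, the geodesic is essentially read off, and the only genuine point is the short achronality argument of the second paragraph. In particular the global hyperbolicity of $(M,g)$ enters only through Proposition \ref{FundProp} (via the Cauchy hypersurface $\Sigma$) and is not used again here. The one obstacle one might worry about a priori — guaranteeing that the curve approaching $p$ stays inside $\iota(M)$ rather than straying out of it — is exactly what the achronality of $\Gamma$ resolves.
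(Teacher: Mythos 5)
There is a genuine gap, and it sits at the very first step: you posit ``the $\tilde g$-geodesic $c$ with $c(0)=p$ and $\dot c(0)$ past directed timelike''. The extension metric $\tilde g$ is merely continuous, so its Christoffel symbols are not defined, there is no geodesic equation, and no exponential map at the boundary point $p$ -- even Peano existence for the geodesic ODE would require continuity of the \emph{first derivatives} of $\tilde g$, which are unavailable; and $p\in\partial\iota(M)$ is precisely a point at which one cannot retreat to the smooth metric $g$. So the curve your whole argument is built on does not exist. If you instead take the vertical coordinate line through $p$ (timelike for small $\delta$), your achronality argument is correct and shows it stays below the graph of $f$, hence pulls back to a future directed, future inextendible timelike \emph{curve} in $M$ with limit $p$ -- but this is essentially immediate from the definition of $\partial^+\iota(M)$ together with Remark \ref{RemBoundary}, and it is not a geodesic of $(M,g)$. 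The geodesic property is the entire content of Theorem \ref{ThmGLS} beyond that triviality, and it is exactly what the main proof needs: the dichotomy of Theorem \ref{ThmFact} (future completeness versus $r\circ\tau\to 0$) holds for geodesics, not for arbitrary timelike curves, and is what rules out escape through the exterior.

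Accordingly, your closing claim that ``the entire difficulty is packaged into Proposition \ref{FundProp}'' and that global hyperbolicity enters only there is a red flag. The actual proof in \cite{GalLinSbi17} has to manufacture the geodesic \emph{inside} $M$, where $g$ is smooth: one fixes a point $q$ on the timelike curve approaching $p$, takes points $q_k\to p$ along it, invokes global hyperbolicity a second time to obtain \emph{maximizing} timelike geodesics from $\iota^{-1}(q)$ to $\iota^{-1}(q_k)$, and then runs a limit-curve argument, using the near-Minkowskian bounds $|\tilde g_{\mu\nu}-m_{\mu\nu}|<\delta$ in the chart both for uniform control of the segments and to show the limit geodesic is timelike (rather than null) and acquires an endpoint on the boundary. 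None of this is circumvented by the achronal graph. A further, minor, slip: the statement of Proposition \ref{FundProp} does not literally assert that $p$ lies on the graph of $f$ (i.e.\ that $f(\underline{0})=0$), which your achronality argument uses when it places both $(f(\underline{c}),\underline{c})$ and $p$ on $\Gamma$; this does follow from the proof of the proposition, or after recentering as in Theorem \ref{ThmGeodesic}, but it needs to be said.
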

In fact, the proof in \cite{GalLinSbi17} together with Proposition \ref{FundProp} gives
\begin{theorem}\label{ThmGeodesic}
Let $(M,g)$ be a globally hyperbolic Lorentzian manifold and $\iota : M \hookrightarrow \tilde{M}$ a $C^0$-extension. Assume that $\partial^+ \iota(M) \neq \emptyset$ and let $p \in \partial^+ \iota(M)$. Let $\tilde{\varphi} : \tilde{U} \to (-\varepsilon_0, \varepsilon_0) \times (-\varepsilon_1, \varepsilon_1)^d$ be a chart around $p$ as in Proposition \ref{FundProp}. Then there exists a future directed timelike geodesic $\tau : [-1,0) \to M$ that is future inextendible in $M$ and such that $\tilde{\varphi} \circ \iota \circ \tau :[-1,0) \to (-\varepsilon_0, \varepsilon_0) \times (-\varepsilon_1, \varepsilon_1)^d$ maps into $\{(s,\underline{x}) \in (-\varepsilon_0, \varepsilon_0) \times (-\varepsilon_1, \varepsilon_1)^{d} \; | \: s < f(\underline{x})\}$ and has an endpoint on $\{(s,\underline{x}) \in (-\varepsilon_0, \varepsilon_0) \times (-\varepsilon_1, \varepsilon_1)^{d} \; | \: s = f(\underline{x})\}$.
\end{theorem}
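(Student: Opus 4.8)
The plan is to localise the maximising-geodesic argument of \cite{GalLinSbi17} to the chart provided by Proposition \ref{FundProp}. Fix such a chart $\tilde\varphi : \tilde U \to (-\varepsilon_0,\varepsilon_0)\times(-\varepsilon_1,\varepsilon_1)^d$ around $p$; by construction $\tilde\varphi(p)=(0,\ldots,0)$ and $f(0)=0$, so $p$ lies on the graph of $f$, while by Remark \ref{RemBoundary} the region $\{x_0<f(\underline{x})\}$ below the graph is contained in $\tilde\varphi\big(\iota(I^+(\Sigma,M))\cap\tilde U\big)$. Choose a base point $b=\tilde\varphi^{-1}(-\eta,0,\ldots,0)$ with $\eta>0$ small and a sequence $q_n=\tilde\varphi^{-1}(-h_n,0,\ldots,0)$ with $h_n\downarrow 0$, so that $q_n\to p$ up the (timelike) $x_0$-axis and the connecting segment lies below the graph; in particular $\iota^{-1}(q_n)\in I^+(\iota^{-1}(b),M)$. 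Since $(M,g)$ is globally hyperbolic, Avez--Seifert yields for each $n$ a future-directed unit-speed timelike geodesic $\tau_n:[0,L_n]\to M$ from $\iota^{-1}(b)$ to $\iota^{-1}(q_n)$ realising $L_n:=d\big(\iota^{-1}(b),\iota^{-1}(q_n)\big)$. The desired geodesic will be extracted as a limit of the $\tau_n$.

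The argument then rests on two points. First, each $\tau_n$ stays below the graph of $f$ and inside the coordinate box: its time-reverse is a past-directed timelike curve issuing from $q_n$, which lies below the graph, so by Remark \ref{RemBoundary} it remains below the graph---hence within $(-\varepsilon_1,\varepsilon_1)^d$ in the spatial coordinates and inside $\iota(M)$---throughout. Second, the near-Minkowski bound $|\tilde g_{\mu\nu}-m_{\mu\nu}|<\delta$ controls the limit. The lengths $L_n$ converge to a finite positive value $L_\infty$ (comparable to $\eta$), and the initial velocities $v_n\in T_{\iota^{-1}(b)}M$ remain in a compact subset of the unit hyperboloid bounded away from the null cone, since the straight segments from $b$ to $q_n$---and with them the maximising geodesics---point nearly along $\partial_{x_0}$. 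Passing to a subsequence we obtain $v_n\to v_\infty$, and by smooth dependence of the geodesic flow on its initial data (equivalently, by the limit curve theorem) the $\tau_n$ converge, uniformly on compact parameter subintervals, to the geodesic $\tau:[0,L_\infty)\to M$ with data $(\iota^{-1}(b),v_\infty)$. Since $\iota\circ\tau_n(L_n)=q_n\to p$, one reads off $\iota\circ\tau(s)\to p$ as $s\to L_\infty$.

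It remains to collect the three assertions. As a limit of curves lying below the graph, $\tilde\varphi\circ\iota\circ\tau$ takes values in $\{x_0\le f(\underline{x})\}$; and since $\tau$ maps $[0,L_\infty)$ into $\iota(M)$, which is disjoint from the graph, it in fact maps into $\{x_0<f(\underline{x})\}$, with chart-endpoint $p=(0,f(0))$ on $\{x_0=f(\underline{x})\}$. Finally $\tau$ is future-inextendible in $M$: if $\tau(s)$ remained in a compact set $K\subseteq M$ as $s\to L_\infty$, then $\iota\circ\tau(s)$ would remain in the compact set $\iota(K)$, forcing its limit $p$ to lie in $\iota(M)$ and contradicting $p\in\partial\iota(M)$. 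An affine reparametrisation onto $[-1,0)$ then produces the geodesic claimed in the theorem.

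I expect the main obstacle to be the limit in the second step---concretely, excluding that the initial velocities $v_n$ degenerate towards the null cone, and that the maximising geodesics escape through the sides $\{|\underline{x}|=\varepsilon_1\}$ of the box before approaching $p$. Both are precluded by the $\delta$-closeness to Minkowski together with Remark \ref{RemBoundary}, which confines each $\tau_n$ below $f$ (and, as $q_n\to p$ runs up the axis, near $\underline{x}=0$). Turning these heuristics into uniform estimates, rather than any conceptual difficulty, is where the actual work lies.
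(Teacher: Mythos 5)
Your overall architecture --- Avez--Seifert maximizers $\tau_n$ from a fixed axis point to a sequence $q_n \to p$, extraction of a limit geodesic, inextendibility because $p \in \partial\iota(M)$ --- is exactly the route the paper intends: the paper gives no self-contained proof of this theorem but asserts that the argument of \cite{GalLinSbi17}, combined with Proposition \ref{FundProp}, yields it. The genuine gap lies in the two ``uniform estimates'' you defer to your final paragraph, and the justification you offer for them does not work as stated. Remark \ref{RemBoundary} constrains a past directed timelike curve only \emph{while it remains in the chart} $\tilde U$: the time-reverse of $\iota \circ \tau_n$ may exit $\ed$ through a side face $\{x_i = \pm \varepsilon_1\}$ at a point strictly below the graph of $f$ --- nothing in Proposition \ref{FundProp} forbids this, since $f$ is merely Lipschitz with constant comparable to the opening of the comparison cones, so the self-trapping estimate one would like, roughly $x_0 < \mathrm{Lip}(f)\cdot C\,(x_0 + \eta)$ with $\mathrm{Lip}(f)\cdot C < 1$, is unavailable --- and once outside $\tilde U$ there are no coordinates and no monotonicity of $x_0$, so the curve may re-enter the chart anywhere after an uncontrolled excursion through $\iota(M)$. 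Consequently neither the confinement of $\iota\circ\tau_n$ below the graph, nor the boundedness of the lengths $L_n$, nor the conclusion $\iota\circ\tau(s) \to p$ (which you ``read off'' from convergence that you only have on compact parameter subintervals $[0, L_\infty - \epsilon]$, i.e.\ away from the endpoint) is established. That this is a real difficulty and not a technicality is visible in the paper itself: in Step 1 of the main proof, in order to keep timelike curves joining two chart points inside $\tilde U$, the author must invoke the future one-connectedness of $\Mint$ (Proposition \ref{IntFutCon}) and run a homotopy-continuity argument using precompactness of $I^- \cap I^+$ in $\tilde U$ --- a tool that is \emph{not} available for the general globally hyperbolic $M$ of the present theorem, which is precisely why the quantitative chart lemmas of \cite{GalLinSbi17} are needed rather than Remark \ref{RemBoundary} alone.

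A second, independent gap is the compactness claim for the initial velocities: ``the maximising geodesics point nearly along $\partial_{x_0}$'' because the straight segments from $b$ to $q_n$ do is a non sequitur --- the endpoints of a maximizer do not control its initial direction. Even granting that the chart-tangents of $\iota\circ\tau_n$ lie in a coordinate cone $C^+_a$, unit timelike vectors whose directions lie in $C^+_a$ can have arbitrarily large components, since directions between the $\tilde g$-null cone and the boundary of $C^+_a$ are permitted; so $v_n$ may still degenerate toward the null cone, and excluding this requires genuine quantitative input (in \cite{GalLinSbi17} this is supplied by the length bounds together with reverse Cauchy--Schwarz-type estimates in the near-Minkowski chart; alternatively one parametrizes by an auxiliary complete Riemannian metric, so that initial \emph{directions} range in a compact set, extracts a causal limit curve, shows limits of maximizers are maximizing geodesics, and then separately proves the limit is timelike rather than null). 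In short: the skeleton of your proposal coincides with the intended proof, but the two steps you yourself flag as ``where the actual work lies'' \emph{are} the proof, and the mechanism you propose for closing them --- Remark \ref{RemBoundary} plus $\delta$-closeness to Minkowski --- is insufficient for both.
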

In particular, Theorem \ref{ThmGeodesic} improves on Theorem \ref{ThmGLS} by guaranteeing that the future endpoint of $\iota \circ \tau$ is contained in the \emph{future} boundary of $M$.

We recall from Section 4.1 of \cite{Sbie15} the definition of the interior $(\Mint, \gint)$ of the $d+1$-dimensional Schwarzschild spacetime, where $\Mint := \R \times \big(0,(2m)^{\frac{1}{d-2}}\big) \times \mathbb{S}^{d-1}$ and
\begin{equation*}
\gint = -\big(1 - \frac{2m}{r^{d-2}}\big) \,dt^2 + \big(1 - \frac{2m}{r^{d-2}}\big)^{-1}\,dr^2 + r^2\, \mathring{\gamma}_{d-1} \;.
\end{equation*}
Here, $d \in \N_{\geq 3}$, $(t,r)$ are the standard coordinates on the first two factors of $\Mint$, $m>0$ is a parameter, and $\mathring{\gamma}_{d-1}$ denotes the standard round metric on $\mathbb{S}^{d-1}$. We define a time-orientation on $(\Mint,\gint)$ by stipulating that $-\frac{\partial}{\partial r}$ is future directed. 
For the definition of the $d+1$-dimensional maximal analytic Schwarzschild spacetime $(\Mmax, \gmax)$ we refer the reader to Section 4.1 of \cite{Sbie15}.

We need the following facts about the Schwarzschild spacetime:	

\begin{theorem}\label{ThmFact}
Let $\tau : [-1, 0) \to \Mmax$ be a future directed and future inextendible timelike geodesic (not necessarily affinely parametrised) in the maximal analytic Schwarzschild spacetime. Then we either have $(r \circ \tau)(s) \to 0$ for $s \to 0$ or $\tau$ is future complete.
\end{theorem}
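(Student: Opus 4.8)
The plan is to reparametrise $\tau$ affinely and then reduce the dichotomy to a one–dimensional analysis of the area radius $r$ governed by an effective potential, the only genuinely global input being the behaviour of $r$ in the two interior regions. First I would replace the parameter $s$ by proper time, obtaining an affinely parametrised, future directed, future inextendible timelike geodesic on a maximal interval $[0,\Lambda)$; since the reparametrisation is monotone with the future end $s\to0$ corresponding to $\lambda\to\Lambda$, the statement is equivalent to: either $r\to0$ as $\lambda\to\Lambda$, or $\Lambda=\infty$ (future completeness). By spherical symmetry I may assume $\tau$ lies in a totally geodesic timelike surface on which the conserved quantities coming from the Killing fields of $(\Mmax,\gmax)$ reduce to an energy $E$ and a single angular momentum $L$. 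Writing $r_h:=(2m)^{1/(d-2)}$ for the horizon radius and $\mu(r):=1-2m\,r^{-(d-2)}$, the normalisation $\gmax(\dot\tau,\dot\tau)=-1$ together with conservation of $E$ and $L$ yields, by a computation in the static exterior chart,
\begin{equation*}
\Big(\tfrac{dr}{d\lambda}\Big)^2 = E^2 - V_L(r),\qquad V_L(r):=\mu(r)\Big(1+\tfrac{L^2}{r^2}\Big);
\end{equation*}
as $r,\dot\tau,E,L$ are analytic on $\Mmax$, this identity persists across the horizons and hence holds along all of $\tau$. I record the elementary facts $V_L(r)\to-\infty$ as $r\to0$, $V_L(r_h)=0$, $V_L(r)\to1$ as $r\to\infty$, and $V_L<0$ throughout the interior $\{0<r<r_h\}$.

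The argument then splits on whether $\tau$ ever enters the black hole interior (region II). Suppose it does. By the stipulated time orientation, in which $-\partial_r$ is future directed, $r$ is strictly decreasing along $\tau$ there; and since $V_L<0\le E^2$ in the interior, the radial equation gives $\dot r^2=E^2-V_L>0$, so there is no turning point. Being strictly decreasing and bounded below by $0$, $r$ tends to a limit $r_\ast\in[0,r_h)$; but $r_\ast\in(0,r_h)$ is impossible, since there $\dot r\to-\sqrt{E^2-V_L(r_\ast)}\neq0$ would push $r$ below $r_\ast$. Hence $r\to0$, the first alternative.

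Suppose instead $\tau$ never enters region II. In the maximal extension a future directed timelike curve can leave the white hole interior only into an exterior, and can leave an exterior (to the future) only into region II; since the latter is excluded, after a finite parameter value $\tau$ remains in a region isometric to $(\Mext,\gext)$ forever. I would then rule out $\Lambda<\infty$. First, $V_L$ is bounded on $\{r\ge r_h\}$, so $\dot r^2=E^2-V_L\le C$ and $r$ is bounded above. Second, $\inf r>r_h$: otherwise, since $V_L(r_h)=0<E^2$ (here $E\neq0$ because $\partial_t$ is timelike in the exterior, so $E=-\gext(\dot\tau,\partial_t)=0$ would give $\dot t=0$ and hence a spacelike tangent), there is no turning point near $r_h$, and $\tau$ would decrease through $r_h$ into region II, a contradiction. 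Thus $r$ ranges over a compact subinterval of $(r_h,\infty)$, on which $\dot t=E/\mu$, $\dot\phi=L/r^2$ and $\dot r$ are all bounded; with $\Lambda<\infty$ the coordinates converge to a regular interior point and $\tau$ extends past $\Lambda$, contradicting future inextendibility. Therefore $\Lambda=\infty$ and $\tau$ is future complete.

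I expect the main obstacle to lie not in either case separately but in the bookkeeping that glues them: justifying that the radial equation, derived in the static chart, is valid globally across the horizons, and verifying the causal claim that never entering region II forces the future portion of $\tau$ into a single exterior region where $r$ stays bounded away from $r_h$. Both reduce to the absence of a turning point of $V_L$ at the horizon together with $E\neq0$ in the exterior, so once these are pinned down the two alternatives follow from the monotonicity and boundedness arguments above.
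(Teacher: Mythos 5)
The paper offers no proof of Theorem \ref{ThmFact} at all: it is quoted as a known fact, with the proof outsourced to Proposition 36 in Chapter 13 of \cite{ONeill}. Your effective-potential argument is therefore not a deviation from an in-paper proof but a reconstruction of that standard reference, and in substance it is the same classical analysis: reduction by spherical symmetry to a totally geodesic surface, conservation of $E$ and $L$ along global Killing fields, the radial equation $\dot r^2 = E^2 - V_L(r)$ propagated across the horizons by continuity, strict monotonicity of $r$ in region II, and the no-turning-point-plus-extension dichotomy in the exterior. What your write-up buys is self-containedness and an explicit check that everything survives in the $d+1$-dimensional case with $\mu(r) = 1 - 2m r^{-(d-2)}$, $d \geq 3$, which the paper simply does not supply.

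Three points need repair, one of them a genuine (though fixable) gap. First, your exclusion of $r_\ast \in (0, r_h)$ in the interior case -- ``$\dot r \to -\sqrt{E^2 - V_L(r_\ast)} \neq 0$ would push $r$ below $r_\ast$'' -- is only a contradiction when $\Lambda = \infty$; if $\Lambda < \infty$ the curve can perfectly well satisfy $\dot r \to -c < 0$ while $r$ decreases to $r_\ast$, e.g. $r(\lambda) = r_\ast + c(\Lambda - \lambda)$, with no parameter left to cross below $r_\ast$. The correct argument is the one you already use in the exterior case: near $r_\ast$ one has $\mu(r_\ast) \neq 0$, so $\dot t = E/\mu$, $\dot\phi = L/r^2$ and $\dot r$ are bounded, hence for $\Lambda < \infty$ all coordinates converge and $\tau$ acquires a future endpoint in $\Mint$, contradicting future inextendibility. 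Second, your causal claim that a future directed timelike curve can leave the white hole region \emph{only} into an exterior is false as stated: it can also pass through the bifurcation sphere directly into region II (this is exactly the $E = 0$ case, where the exterior argument's appeal to $E \neq 0$ is unavailable). This is harmless -- such a curve enters region II and falls under your first alternative -- but the case split should route it explicitly. Third, the step you yourself flag, that reaching $r = r_h$ in finite affine parameter entails actually crossing the horizon rather than terminating there, does require passing to Kruskal-type coordinates in which the metric and the geodesic ODE are regular at $r = r_h$; this is precisely where O'Neill's treatment, which works on the Kruskal plane from the outset, is cleaner than deriving the radial equation in the static chart and arguing by analyticity afterwards.
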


For the proof of this theorem see for example Proposition 36 in Chapter 13 of \cite{ONeill}. The next two results were proven in Section 6 of \cite{Sbie15}. 

\begin{lemma}
\label{BoundsOnReach}
Let $0<r_0< (2m)^{\frac{1}{d-2}}$. For every $\varepsilon >0$ we can find $0<\tilde{r}_0 < r_0 $ such that for any future directed timelike curve $\sigma : (-r_0,0) \to \Mint$, $\sigma(s) = \big(\sigma_t(s), -s, \sigma_{\omega}(s)\big)$, where $\sigma_{\omega}$ is the canonical projection of $\sigma$ on the sphere $\Sd$, we have for all  $-\tilde{r}_0 \leq s,s' <0$
\begin{equation*}
d_{\Sd}\big(\sigma_{\omega}(s), \sigma_{\omega}(s') \big) < \varepsilon    \qquad \textnormal{ and } \qquad   |\sigma_t(s) - \sigma_t(s')| < \varepsilon \;.
\end{equation*}
\end{lemma}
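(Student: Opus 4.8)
The plan is to extract pointwise bounds on the $t$- and angular components of the velocity of $\sigma$ directly from the timelike condition, and then integrate these against the radial coordinate, exploiting that $r$ is itself (up to sign) the curve parameter. Write $H(r) := \frac{2m}{r^{d-2}} - 1$, so that $1 - \frac{2m}{r^{d-2}} = -H(r)$ with $H(r) > 0$ throughout the interior; thus the $dt^2$-coefficient of $\gint$ equals $H$ and the $dr^2$-coefficient equals $-H^{-1}$. With the parametrisation $\sigma(s) = (\sigma_t(s), -s, \sigma_\omega(s))$ one has $\dot{r} = -1$, so along each smooth piece the timelike condition $\gint(\dot\sigma, \dot\sigma) < 0$ reads
$$H(r)\,\dot\sigma_t^2 \;-\; H(r)^{-1} \;+\; r^2\, |\dot\sigma_\omega|_{\mathring\gamma}^2 \;<\; 0 \;,$$
where $r = -s$ and $|\cdot|_{\mathring\gamma}$ is the norm on $\Sd$. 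Since the first and third terms are non-negative, each is separately dominated by $H(r)^{-1}$.

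This gives the two pointwise estimates
$$|\dot\sigma_t| \;<\; H(r)^{-1} \qquad\text{and}\qquad |\dot\sigma_\omega|_{\mathring\gamma} \;<\; \frac{1}{r\,\sqrt{H(r)}} \;.$$
Here future-directedness enters only to guarantee that $r = -s$ is a legitimate monotone parameter; no information on the sign of $\dot\sigma_t$ is required. Next I would convert these velocity bounds into bounds on total variation. Changing the parameter from $s$ to $r = -s$ (for which $|ds/dr| = 1$), for any $-\tilde r_0 \le s, s' < 0$ one obtains
$$|\sigma_t(s) - \sigma_t(s')| \;\le\; \int_0^{\tilde r_0} H(r)^{-1}\,dr \;, \qquad d_{\Sd}\big(\sigma_\omega(s), \sigma_\omega(s')\big) \;\le\; \int_0^{\tilde r_0} \frac{dr}{r\,\sqrt{H(r)}} \;,$$
the second inequality because the $\mathring\gamma$-length of the spherical projection dominates the distance $d_{\Sd}$. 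Both estimates are uniform in $\sigma$, since their right-hand sides depend only on $r$.

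It then remains to show that both right-hand sides tend to $0$ as $\tilde r_0 \to 0$; this fixes the choice of $\tilde r_0 < r_0$ making both integrals smaller than $\varepsilon$. As $r \to 0$ one has $H(r)^{-1} = \frac{r^{d-2}}{2m - r^{d-2}} \sim \frac{r^{d-2}}{2m}$, manifestly integrable for every $d \ge 3$, while
$$\frac{1}{r\,\sqrt{H(r)}} \;=\; \frac{r^{(d-4)/2}}{\sqrt{2m - r^{d-2}}} \;\sim\; \frac{r^{(d-4)/2}}{\sqrt{2m}} \;.$$
The crux of the argument is the integrability of this second integrand near $r = 0$: although the pointwise angular speed bound itself \emph{diverges} as $r \to 0$, the exponent satisfies $(d-4)/2 > -1$ precisely because $d \ge 3$, so $r^{(d-4)/2}$ is integrable on $(0, \tilde r_0)$ and its integral vanishes as $\tilde r_0 \to 0$; the borderline case is $d = 3$, where the integrand behaves like $r^{-1/2}$. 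Both integrals are therefore finite and tend to $0$, and choosing $\tilde r_0 < r_0$ so small that each falls below $\varepsilon$ completes the argument. The only remaining points are bookkeeping: as $\sigma$ is merely piecewise smooth, the velocity bounds are read off on each smooth piece and the variations summed, which does not affect the integral estimates.
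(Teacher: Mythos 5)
Your proposal is correct and takes essentially the same route as the paper's proof: you read off the pointwise bounds $|\dot\sigma_t| < H(r)^{-1}$ and $|\dot\sigma_\omega|_{\mathring\gamma} < \bigl(r\sqrt{H(r)}\bigr)^{-1}$ from the sign structure of the timelike condition with $\dot r = -1$, which after substituting $H(r) = \frac{2m-r^{d-2}}{r^{d-2}}$ are exactly the paper's bound \eqref{BoundTVel} and its angular companion, and you conclude, as the paper does, by integrability of both bounds near $r=0$ for $d \geq 3$. The only differences are cosmetic (the notation $H(r)$, the explicit verification of the exponent $(d-4)/2 > -1$, and the remark on summing over smooth pieces).
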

Note that this lemma implies, in particular, that $\sigma_t(s)$ and $\sigma_\omega(s)$ converge for $s \nearrow 0$. We include its proof here because part of it will be referred to later on.
\begin{proof}
Let $\sigma : (-r_0,0) \to \Mint$ be a timelike curve, parametrised as above. We obtain for all $s \in (-r_0,0)$
\begin{equation*}
0 > \gint(\dot{\sigma}(s),\dot{\sigma}(s)) = -\big(1- \frac{2m}{(-s)^{d-2}}\big) (\dot{\sigma}_t(s))^2 + \big(1- \frac{2m}{(-s)^{d-2}}\big)^{-1} + s^2 \, \mathring{\gamma}_{d-1}\big(\dot{\sigma}_{\omega}(s), \dot{\sigma}_{\omega}(s)\big) 
\end{equation*}
and hence
\begin{equation*}
\frac{(-s)^{d-2}}{2m - (-s)^{d-2}}  >  \underbrace{\frac{2m - (-s)^{d-2}}{(-s)^{d-2}} (\dot{\sigma}_t(s))^2}_{\geq 0}  + \underbrace{ s^2 \, \mathring{\gamma}_{d-1}\big(\dot{\sigma}_{\omega}(s), \dot{\sigma}_{\omega}(s)\big)}_{\geq 0} \;.
\end{equation*}
It follows that
\begin{equation}
\label{BoundTVel}
|\dot{\sigma}_t(s)| < \frac{(-s)^{d-2}}{2m - (-s)^{d-2}} 
\end{equation}
and
\begin{equation*}
||\dot{\sigma}_\omega(s)||_{\Sd} < \frac{(-s)^{\nicefrac{d}{2} - 2}}{\big[2m - (-s)^{d-2}\big]^{\nicefrac{1}{2}}}
\end{equation*}
holds for all $s \in (-r_0,0)$. Since $d \geq 3$, it follows that both upper bounds are integrable on $(-r_0,0)$. The lemma now follows from integration.
\end{proof}

\begin{proposition}
\label{IntFutCon}
The interior of the Schwarzschild spacetime $(\Mint, \gint)$ is future one-connected.
\end{proposition}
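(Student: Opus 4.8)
The plan is to exploit that inside the black hole $r$ is a time function. Since $-\partial_r$ is future directed and, from the form of $\gint$, $\partial_r$ is timelike in the interior (indeed $\gint(\partial_r,\partial_r)=(1-2m/r^{d-2})^{-1}<0$), every future-directed timelike vector $v$ satisfies $dr(v)<0$, so $r$ strictly decreases along every future-directed timelike curve. Given two timelike curves with the same endpoints — which we may assume future directed — sharing a past endpoint $p$ with $r(p)=R$ and a future endpoint $q$ with $r(q)=\rho$, $0<\rho<R<(2m)^{1/(d-2)}$, I would reparametrise both by $r\in[\rho,R]$, writing $\gamma_i(r)=(t_i(r),r,\omega_i(r))$ with $\omega_i(r)\in\Sd$. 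Inserting this into $\gint(\dot\gamma_i,\dot\gamma_i)<0$ and setting $F(r)=\tfrac{2m}{r^{d-2}}-1>0$ gives the pointwise constraint
\begin{equation*}
F(r)\,(t_i')^2+r^2\,\mathring\gamma_{d-1}(\omega_i',\omega_i')<F(r)^{-1},
\end{equation*}
i.e.\ $\|(t_i',\omega_i')\|_r<F(r)^{-1/2}$ with $\|(a,w)\|_r:=\big(F(r)a^2+r^2\mathring\gamma_{d-1}(w,w)\big)^{1/2}$, a norm on $\R\times T_\omega\Sd$ for each fixed $r$. A timelike homotopy rel endpoints is then precisely a path (in a suitable $C^1$-topology) in the space $\mathcal A$ of graphs $r\mapsto(t(r),\omega(r))$ with these fixed endpoints obeying the velocity bound a.e., so the goal becomes: $\mathcal A$ is path-connected.

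The engine is convexity of the constraint. For fixed $r$, $\|\cdot\|_r$ is a norm, so in a vector space the affine homotopy $\gamma_\lambda=(1-\lambda)\gamma_0+\lambda\gamma_1$ would be admissible by the triangle inequality, $\|\gamma_\lambda'\|_r\le(1-\lambda)\|\gamma_0'\|_r+\lambda\|\gamma_1'\|_r<F(r)^{-1/2}$. This at once disposes of the $\R$-valued coordinate $t$: throughout any homotopy its $t$-component may be taken to be the affine interpolation of $t_0$ and $t_1$. The entire difficulty is therefore concentrated in the sphere factor $\Sd$, which is not a vector space.

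For the sphere I would first record the confining estimate. Integrating the velocity bound exactly as in the proof of Lemma \ref{BoundsOnReach} (with $s=-r$) gives $\mathring\gamma_{d-1}(\omega_i',\omega_i')^{1/2}<r^{(d-4)/2}(2m-r^{d-2})^{-1/2}$, so the spherical length of any admissible curve is bounded by $\int_\rho^R r^{(d-4)/2}(2m-r^{d-2})^{-1/2}\,dr<\int_0^{(2m)^{1/(d-2)}}(\cdots)\,dr=\tfrac{\pi}{d-2}$. Thus no admissible spherical projection can wind around $\Sd$, and for $d\ge4$ all of them lie in a geodesically convex ball. Since $d\ge3$, $\Sd$ is simply connected, so $\omega_0,\omega_1$ are homotopic rel endpoints as continuous maps; the proposition asks to upgrade this to a homotopy respecting the budget. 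Here I would combine two facts. First, each $\gamma_i$ is piecewise smooth on the compact interval $[\rho,R]$ with tangent strictly inside the light cone, so compactness yields a definite slack $\|\gamma_i'\|_r\le(1-\eta_i)F(r)^{-1/2}$ with $\eta_i>0$. Second, for two admissible curves that are sufficiently $C^1$-close, the curve built by geodesic interpolation on $\Sd$ and affine interpolation in $t$ is again admissible, because geodesic interpolation on the round sphere raises the pointwise speed only by the Jacobi-field factor $1/\cos(\ell(r)/2)=1+O(\ell(r)^2)$, $\ell(r)=d_{\Sd}(\omega_0(r),\omega_1(r))$, which the slack absorbs once $\ell$ is small. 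This makes $\mathcal A$ locally path-connected; to connect $\gamma_0$ to $\gamma_1$ globally I would refine the topological homotopy (available by simple-connectedness, and kept inside the confinement region) into finitely many localised moves, each supported on a short $r$-subinterval and small enough in amplitude to preserve admissibility of the intermediate curve.

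The hard part will be this last step, and the culprit is the positive curvature of $\Sd$: there is \emph{no} interpolation between two distinct admissible curves that does not raise the pointwise speed (the factor $1/\cos(\ell/2)$ exceeds $1$ whenever $\ell>0$), so one cannot interpolate $\gamma_0$ and $\gamma_1$ directly and must instead pass through many nearby admissible intermediates, maintaining the budget at every stage. The borderline dimension is $d=3$, where the length bound $\tfrac{\pi}{d-2}=\pi$ equals the diameter of $\mathbb S^2$ and the confinement argument degenerates, so that case in particular will need the sharper form of the slack estimate. By contrast, the reparametrisation by $r$ and the convexity handling of the $t$-direction are routine.
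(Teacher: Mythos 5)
Your preliminary reductions are correct: $r$ is indeed strictly decreasing along future directed timelike curves in $\Mint$, so both curves may be written as graphs over $r \in [\rho,R]$; the pointwise constraint $F(r)(t')^2 + r^2\,\mathring{\gamma}_{d-1}(\omega',\omega') < F(r)^{-1}$ is right; affine interpolation does dispose of the $t$-component; and your closed-form bound $\int_0^{(2m)^{1/(d-2)}} r^{(d-4)/2}\big(2m - r^{d-2}\big)^{-1/2}\,dr = \frac{\pi}{d-2}$ on the total spherical length is a correct (and pleasant) sharpening of the integration in Lemma \ref{BoundsOnReach}, as is the Jacobi-field amplification factor $1/\cos(\ell/2)$ for geodesic interpolation on the round sphere. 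But everything you establish is preparatory: the actual content of the proposition — producing the timelike homotopy — is exactly what you defer to the step you yourself label the hard part, and that step has a genuine gap.

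The gap is twofold. First, local path-connectedness of $\mathcal{A}$ together with the topological homotopy of $\omega_0$ and $\omega_1$ rel endpoints does not yield a path in $\mathcal{A}$: what is needed is that the admissible curves form a \emph{connected} set inside a continuous homotopy class, which is precisely the assertion of Proposition \ref{IntFutCon}. Your proposed mechanism --- ``finitely many localised moves, each \ldots small enough in amplitude to preserve admissibility'' --- is a name for the missing construction rather than an argument: since $1/\cos(\ell/2) > 1$ whenever the curves differ, every interpolation consumes slack, and nothing in the sketch shows that the intermediate curves of the refined homotopy (which need be close to neither $\gamma_0$ nor $\gamma_1$, and whose spherical speeds are not controlled by those of $\gamma_0,\gamma_1$) respect the velocity budget; chaining through intermediates only helps if they are already known to be admissible, which is circular. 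Second, the case $d = 3$ --- the $3+1$-dimensional Schwarzschild spacetime, the case of principal interest --- is explicitly left unresolved: there the confinement bound is $\pi$, the diameter of $\mathbb{S}^2$, the spherical projections can approach antipodal configurations, minimal geodesics become non-unique, and $1/\cos(\ell/2)$ blows up, so the interpolation scheme degenerates exactly where it is needed. Note also that the paper under review does not prove this proposition at all but cites Section 6 of \cite{Sbie15} for a complete argument; as it stands, your text is a reasonable strategy with correct a priori estimates, but its core step is missing, so it does not constitute a proof.
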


\section{The proof of the $C^0$-inextendibility of the Schwarzschild spacetime}

\begin{proof}[Proof of Theorem \ref{MainThm}:]
The proof is by contradiction, so we assume that there exists a $C^0$-extension $\iota : \Mmax \hookrightarrow \tilde{M}$ of the maximal analytic Schwarzschild spacetime $(\Mmax, \gmax)$. We recall that Lemma 2.17 of \cite{Sbie15} implies that $\partial^+ \iota(\Mmax) \cup \partial^- \iota(\Mmax) \neq \emptyset$. Without loss of generality we assume that $\partial^+ \iota(\Mmax) \neq \emptyset$, otherwise we reverse the time orientation. By Theorem \ref{ThmGeodesic} there is a chart
$\tilde{\varphi} : \tilde{U} \to(-\varepsilon_0, \varepsilon_0) \times  (-\varepsilon_1, \varepsilon_1)^{d}$, $\varepsilon_0, \varepsilon_1 >0$, and a future directed timelike geodesic $\tau : [-1,0) \to \Mmax$ that is future inextendible in $\Mmax$ with the following properties:
\begin{enumerate}[i)]
\item $|\tilde{g}_{\mu \nu} - m_{\mu \nu}| < \delta_0$ (where $\delta_0>0$ is fixed below).
\item There exists a Lipschitz continuous function $f : (-\varepsilon_1, \varepsilon_1)^d \to (-\varepsilon_0, \varepsilon_0)$ with the following property: 
\begin{equation}
\{(x_0,\underline{x}) \in (-\varepsilon_0, \varepsilon_0) \times (-\varepsilon_1, \varepsilon_1)^{d} \; | \: x_0 < f(\underline{x})\} \subseteq \tilde{\varphi}\big( \iota(\Mmax)\cap \tilde{U}\big)
\end{equation} and 
\begin{equation}
\{(x_0,\underline{x}) \in (-\varepsilon_0, \varepsilon_0) \times (-\varepsilon_1, \varepsilon_1)^{d}  \; | \: x_0 = f(\underline{x})\} \subseteq \tilde{\varphi}\big(\partial^+\iota(\Mmax)\cap \tilde{U}\big) \;.
\end{equation}
Moreover, the graph of $f$ is achronal in $\tilde{U}$.
\item $\tilde{\varphi} \circ \iota \circ \tau :[-1,0) \to (-\varepsilon_0, \varepsilon_0) \times (-\varepsilon_1, \varepsilon_1)^d$ maps into $\{(x_0,\underline{x}) \in (-\varepsilon_0, \varepsilon_0) \times (-\varepsilon_1, \varepsilon_1)^{d} \; | \: x_0 < f(\underline{x})\}$ and $\lim_{s \to 0} (\tilde{\varphi} \circ \iota \circ \tau)(s) = (0, \ldots, 0)$.\footnote{This can be guaranteed after recentering the chart.}
\end{enumerate}
We will also introduce the abbreviation $R_{\varepsilon_0, \varepsilon_1} := (-\varepsilon_0, \varepsilon_0) \times ( -\varepsilon_1, \varepsilon_1)^d$.
Let $ 0 < a < 1$ and let $< \cdot, \cdot>_{R^{d+1}}$ denote the Euclidean inner product on $\R^{d+1}$ and $| \cdot |_{\R^{d+1}}$ the associated norm. We introduce the following notation: 
\begin{itemize}
\item $C^+_a := \big{\{} X \in \R^{d+1} \, | \, \frac{<X,e_0>_{\R^{d+1}}}{|X|_{\R^{d+1}}}   > a \big{\}}$
\item $C^-_a := \big{\{} X \in \R^{d+1} \, | \, \frac{<X,e_0>_{\R^{d+1}}}{|X|_{\R^{d+1}}}   < -a \big{\}}$
\item $C^c_a := \big{\{} X \in \R^{d+1} \, | \, -a < \frac{<X,e_0>_{\R^{d+1}}}{|X|_{\R^{d+1}}}   < a \big{\}}$\;.
\end{itemize}
Here, $C^+_a$ is the forward cone of vectors which form an angle of less than $\cos^{-1}(a)$ with the $x_0$-axis, and $C^-_a$ is the corresponding backwards cone. In Minkowski space, the forward and backward cones of timelike vectors correspond to the value $a = \cos(\frac{\pi}{4}) = \frac{1}{\sqrt{2}}$.

Since $\frac{5}{8} < \frac{1}{\sqrt{2}} < \frac{5}{6}$, we can now choose $\delta_0 >0$ such that in the chart $\tilde{\varphi}$ from above all vectors in $C^+_{\nicefrac{5}{6}}$ are future directed timelike, all vectors in $C^-_{\nicefrac{5}{6}}$ are past directed timelike, and all vectors in $C^c_{\nicefrac{5}{8}}$ are spacelike. It is straightforward then to prove\footnote{See also \cite{Sbie15}, Step 1.2 of the proof of Theorem 3.1.} the following estimates for $x \in \ed$:
\begin{equation}
\label{EstimatesOnPastAndFuture}
\begin{split}
&\big( x + C^+_{\nicefrac{5}{6}}\big) \cap  \ed \subseteq I^+(x, \ed) \subseteq \big( x + C^+_{\nicefrac{5}{8}}\big) \cap  \ed \\
&\big( x + C^-_{\nicefrac{5}{6}}\big) \cap  \ed \subseteq I^-(x, \ed) \subseteq \big( x + C^-_{\nicefrac{5}{8}}\big) \cap  \ed  \;.
\end{split}
\end{equation}
By Theorem \ref{ThmFact} the future directed timelike geodesic $\tau : [-1,0) \to \Mmax$ is either future complete (i.e., it `leaves through the exterior') or $(r \circ \tau)(s) \to 0$ holds for $s \to 0$ (i.e., it `leaves through the interior'). We first show that it cannot leave through the exterior.

In order to ease notation we also denote $ \tilde{\varphi} \circ \iota \circ \tau$ in the following by $\tau$.
Since $\dot{\tau}(s) \in C^+_{\nicefrac{5}{8}}$ for all $s \in [-1,0)$, we have $dx_0\big(\dot{\tau}(s)\big) >0$. It follows that we can reparametrise $\tau$ so that we can assume without loss of generality that  $\tau: [-s_0, 0) \to \ed$ is given by $\tau(s) = \big(s, \underline{\tau}(s)\big)$, where $-s_0 \in (-\varepsilon_0,0)$.
From $\dot{\tau}(s) \in C^+_{\nicefrac{5}{8}}$ for all $s \in [-s_0, 0)$, it follows that
\begin{equation*}
\frac{5}{8} < \frac{<\dot{\tau}(s), e_0>_{\R^{d+1}}}{|\dot{\tau}(s)|_{\R^{d+1}}} = \frac{1}{\sqrt{1+ |\dot{\underline{\tau}}(s)|_{\R^d}}} \;.
\end{equation*}
Hence, we obtain $|\dot{\underline{\tau}}(s)|_{\R^d} < \frac{\sqrt{39}}{5} $ for all $s \in [-s_0,0)$.
Together with the uniform bound on the metric components i), it now follows that
\begin{equation*}
\int\limits_{-s_0}^0 \sqrt{-\tilde{g}\big(\dot{\tau}(s), \dot{\tau}(s)\big)} \,ds = \int\limits_{-s_0}^0 \sqrt{ - \Big[ \tilde{g}_{00} + 2\sum_{i=1}^d \tilde{g}_{0i} \dot{\underline{\tau}}_i(s) + \sum_{i,j =1}^d \tilde{g}_{ij} \dot{\underline{\tau}}_i(s) \dot{\underline{\tau}}_j(s) \Big]}\, ds \leq C < \infty \;.
\end{equation*}
Thus, $\tau$ is not future complete. We are left with the possibility that $\tau$ leaves through the interior, i.e., $(r \circ \tau)(s) \to 0$ holds for $s \to 0$. In the following we will also derive a contradiction from this premiss.
\newline
\newline
We consider the curve $\tilde{\gamma} : (-\varepsilon_0, 0] \to \tilde{M}$ which is given in the chart $\tilde{\varphi}$ by $(-\varepsilon_0, 0] \ni s \mapsto (s, 0, \ldots, 0)$ and we set $\gamma :=  \iota^{-1} \circ \tilde{\gamma}|_{(-\varepsilon_0, 0)}$, which is a future directed and future inextendible timelike curve in $\Mmax$. We claim that $(r \circ \gamma)(s) \to 0$ for $s \to 0$. To see this, we observe that by Proposition 2.6 of \cite{Sbie15}  $I^+\big(\tilde{\gamma}(s), \tilde{U}\big)$ is an open neighbourhood of $(0, \ldots, 0)$ for every $s \in (-\varepsilon_0, 0)$. Thus, there exists a future directed timelike curve from $\tilde{\gamma}(s)$  to $(\tilde{\varphi} \circ \iota \circ \tau)\big([-1,0)\big)$ for all $s \in (-\varepsilon_0, 0)$. By the first point of Remark \ref{RemBoundary} these timelike curves lie completely in $\iota(\Mmax)$. It is easy now to conclude that $\gamma$ must also leave through the interior\footnote{Indeed, through the same point of the conformal boundary as given by the Penrose conformal compactification.}. From now on we can forget about $\tau$ -- we will only work with $\gamma$ now. The rest of the proof proceeds in three steps.

\underline{\textbf{Step 1:}} We show that there exists a $\mu >0$ such that
\begin{enumerate}
\item $\iota \Big( I^+\big( \gamma(-\mu), \Mint\big)\Big) \subseteq \tilde{\varphi}^{-1} \Big(\{(x_0,\underline{x}) \in (\varepsilon_0, \varepsilon_0) \times (-\varepsilon_1, \varepsilon_1)^{d} \; | \: x_0 < f(\underline{x})\}\Big)$
\item $(-\varepsilon_0, -\frac{49}{50} \varepsilon_0) \times (-\varepsilon_1, \varepsilon_1)^d \subseteq I^-\big((\tilde{\varphi}\circ \tilde{\gamma})(-\mu), (-\varepsilon_0, \varepsilon_0) \times (-\varepsilon_1, \varepsilon_1)^d\big) \;.$
\end{enumerate}
\vspace*{2mm}

\textbf{Step 1.1}
As in the proof of Proposition \ref{FundProp} we can without loss of generality assume that $0< \varepsilon_1 < \frac{1}{2} \varepsilon_0$. We now choose $x^+ :=(x^+_0, 0, \ldots, 0)$, $0< x^+_0 <\varepsilon_0$, and $x^- :=(x^-_0, 0, \ldots, 0)$, $-\varepsilon_0 < x^-_0 <0$ such  that the closure of $\big(x^+ + C^-_{\nicefrac{5}{6}}\big) \cap \big(x^- + C^+_{\nicefrac{5}{6}}\big)$ in $\ed$ is compact.
Now choose $y^- := (y^-_0, 0, \ldots, 0)$ with $ - \frac{1}{5}\varepsilon_0 < y^-_0 < 0$ so that the closure of $C^-_{\nicefrac{5}{8}} \cap \big(y^- + C^+_{\nicefrac{5}{8}} \big)$ in $\ed$ is contained in  $x^+ + C^-_{\nicefrac{6}{7}} \cap x^- + C^+_{\nicefrac{6}{7}}$. Moreover, since we have $- \frac{1}{5}\varepsilon_0 < y^-_0$ and $0< \varepsilon_1 < \frac{1}{2} \varepsilon_0$ it follows that 
\begin{equation}\label{EqBottomChart}
(-\varepsilon_0, -\frac{49}{50} \varepsilon_0) \times (-\varepsilon_1, \varepsilon_1)^d \subseteq I^-\big(y^-, \ed\big) \;.
\end{equation} 
\begin{figure}[h]
  \centering
\def\svgwidth{7.5cm}
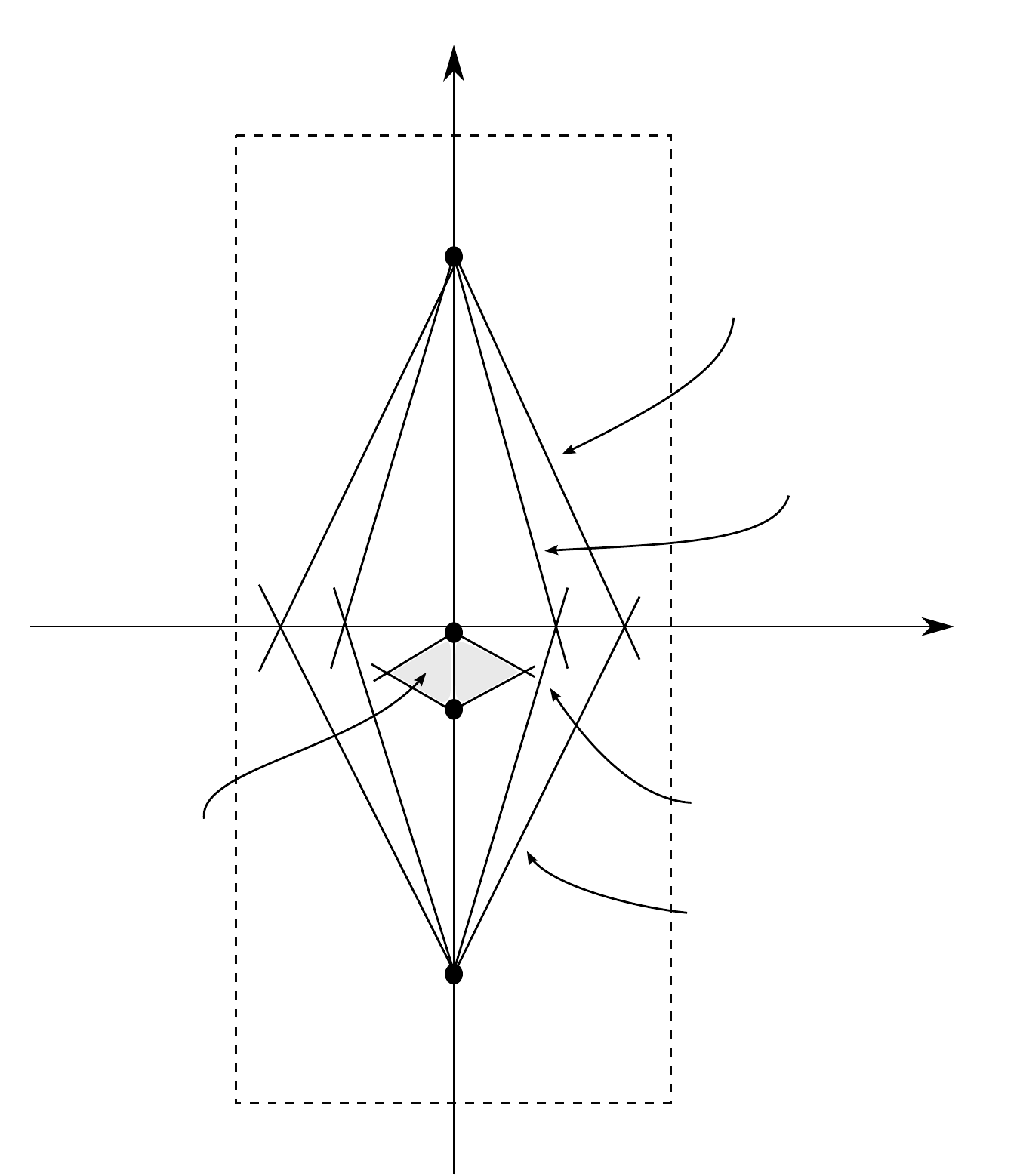
  \caption{The set-up of Step 1.1} \label{FigSetUp}
\end{figure}
We claim that for all $y^-_0 < s < 0$ we have 
\begin{equation}
\label{FuturePastPreserving}
\tilde{\varphi}^{-1} \Big(I^-\big((s,0, \ldots, 0), \ed\big) \cap I^+\big(y^-, \ed\big) \Big) \\
= \iota \Big(I^-\big(\gamma(s), \Mint \big) \cap I^+\big( \gamma(y^-_0), \Mint\big) \Big) \;.
\end{equation}

The inclusion ``$\, \subseteq \,$'' follows from the first point of Remark \ref{RemBoundary}, since if $\sigma$ is a past directed timelike curve in $\ed$ from $(s,0, \ldots, 0)$ to $y^-$, then the first point of Remark \ref{RemBoundary} states that $\tilde{\varphi}^{-1} \circ \sigma$ is contained in $\iota(\Mint)$.

To prove ``$\, \supseteq \,$'', let $\sigma : [y_0^-, s] \to \Mint$ be a future directed timelike curve from $\gamma(y_0^-)$ to $\gamma(s)$. By Proposition \ref{IntFutCon}, there exists a timelike homotopy $\Gamma :  [0,1]\times [y^-_0,s] \to \Mint$ with fixed endpoints between $\gamma |_{[y^-_0,s]}$ and $\sigma$. It follows that $\iota \circ \Gamma: [0,1] \times [y^-_0,s]  \to \tilde{M}$ is a timelike homotopy with fixed endpoints in $\tilde{M}$. We need to show that $(\iota \circ \sigma) (\cdot)= (\iota \circ \Gamma)(1, \cdot)$ maps into $\tilde{U}$. We argue by continuity, i.e., we show that the interval $J := \{t \in [0,1] \,|\, \iota \circ \Gamma\big(t, [y^-_0,s]\big) \subseteq \tilde{U}\}$ is non-empty, open and closed in $[0,1]$.

Clearly, we have $0 \in J$, since $\iota \circ \Gamma(0, \cdot) = \tilde{\gamma}|_{[y^-_0,s]}$. The openness follows from the openness of $\tilde{U}$, and the closedness follows since $I^-\big(\tilde{\gamma}(s), \tilde{U}\big) \cap I^+ \big(\tilde{\gamma}(y^-_0),\tilde{U}\big)$ is precompact in $\tilde{U}$, i.e., in particular its closure in $\tilde{M}$ is contained in $\tilde{U}$.
This finishes the proof of \eqref{FuturePastPreserving}.

Moreover, it is easy to see\footnote{See also \cite{Sbie15}, Proposition 2.7} that the following holds
\begin{equation*}
I^-\big(0, \ed\big) \cap I^+\big(y^-, \ed\big) = \bigcup_{y_0^- < s < 0} \Big(I^-\big((s,0, \ldots, 0), \ed\big) \Big)\cap I^+\big(y^-, \ed\big) \;.
\end{equation*} 
Together with \eqref{FuturePastPreserving} this now implies
\begin{equation}
\label{FuturePastPreserving2}
\tilde{\varphi}^{-1} \Big(I^-\big(0, \ed\big) \cap I^+\big(y^-, \ed\big) \Big) \\
=  \iota \Big(\Big(\bigcup_{y_0^- < s < 0} I^-\big(\gamma(s), \Mint \big)\Big) \cap I^+\big( \gamma(y^-_0), \Mint\big) \Big) \;.
\end{equation}
\vspace*{2mm}

For the next step we recall from \cite{Sbie15}, Definition 2.14, that for a Lorentzian manifold $(M,g)$ two sets $A,B \subseteq M$ are called \emph{timelike separated by a set $K \subseteq M$}, iff every timelike curve connecting $A$ and $B$ intersects $K$.
\vspace*{2mm}

\textbf{Step 1.2} In this step we switch back to the manifold $(\Mint,\gint)$. Choose a $y^+_0$ with $y^-_0 < y^+_0 <0$ and define 
\begin{equation}
\label{DefK}
K:=\Big[\Big(\bigcup_{y_0^- < s < 0} I^-\big(\gamma(s), \Mint \big)\Big) \cap I^+\big( \gamma(y^-_0), \Mint\big) \Big] \setminus I^+\big( \gamma(y^+_0), \Mint\big)  \;.
\end{equation}
\vspace*{2mm}

\textbf{Step 1.2.1}  We show that the set $K$ timelike separates the set $\gamma\big((y^+_0,0)\big)$ from $I^-\big(\gamma(y^-_0),\Mint\big)$.
\vspace*{2mm}

So let $\sigma : [0,1] \to \Mint$ be a past directed timelike curve with $\sigma (0) \in \gamma\big((y^+_0,0)\big)$ and $\sigma(1) \in I^-\big(\gamma(y^-_0),\Mint\big)$. 

\begin{figure}[h]
\centering
\def\svgwidth{4.8cm}
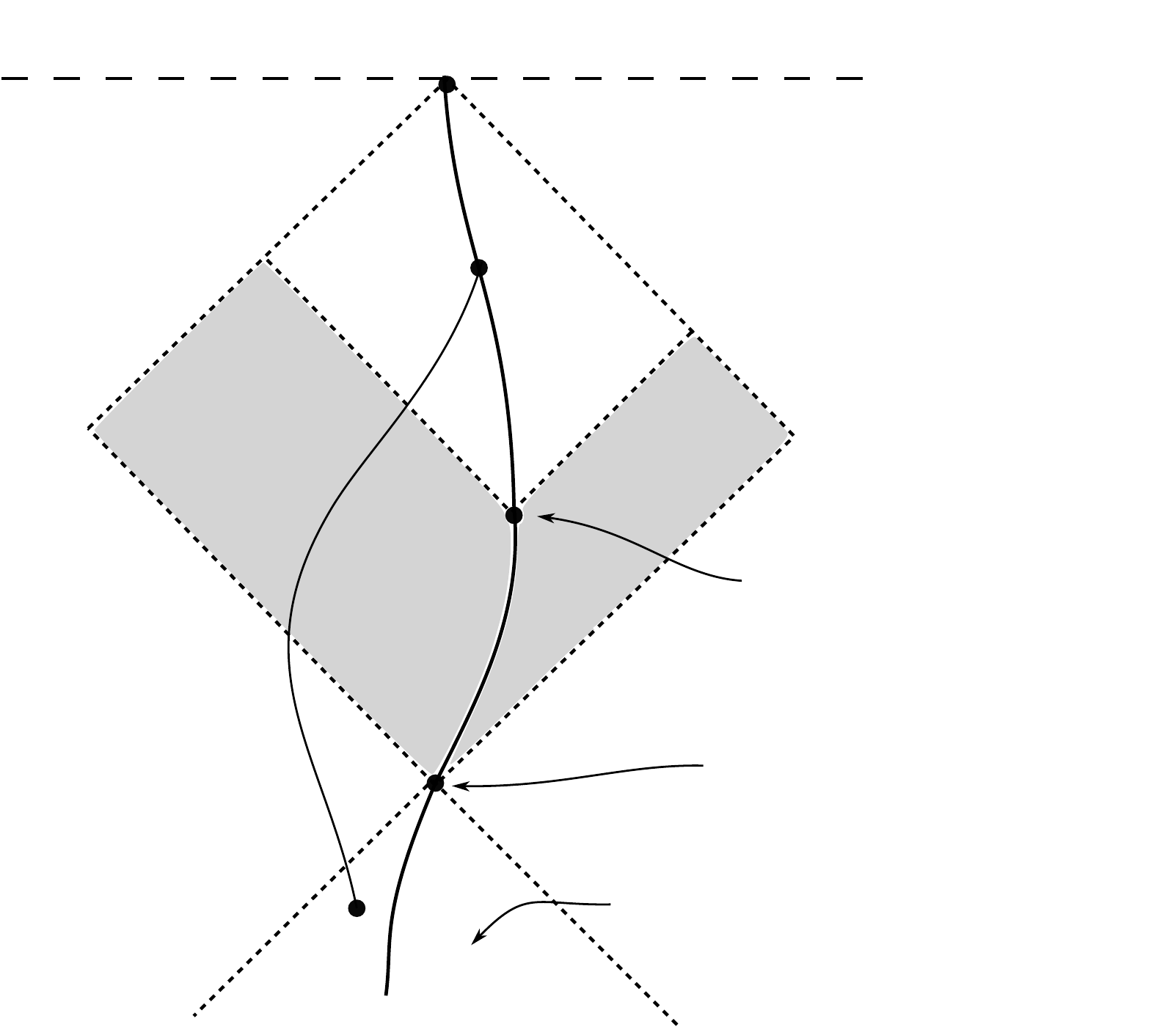
\caption{For the proof of Step 1.2.1}
\end{figure}

We claim that there exists a $\Delta_- \in (0,1)$ such that 
\begin{equation*}
\sigma^{-1}\big[I^+\big(\gamma(y^-_0),\Mint\big)\big] = [0,\Delta_-)\;.
\end{equation*} 
This is seen as follows: To begin with, it is clear that $0 \in \sigma^{-1}\big[I^+\big(\gamma(y^-_0),\Mint\big)\big]$. Moreover, by the continuity of $\sigma$ and the openness of $I^+\big(\gamma(y^-_0),\Mint\big)$, we know that $\sigma^{-1}\big[I^+\big(\gamma(y^-_0),\Mint\big)\big]$ is open in $[0,1]$. Moreover, since $\sigma$ is a past directed timelike curve, it follows that if $s_0 \in \sigma^{-1}\big[I^+\big(\gamma(y^-_0),\Mint\big)\big]$, then we also have $[0,s_0] \in \sigma^{-1}\big[I^+\big(\gamma(y^-_0),\Mint\big)\big]$. And finally, since $(\Mint,\gint)$ satisfies the chronology condition, there are no closed timelike curves in $\Mint$, and hence $I^-\big(\gamma(y^-_0),\Mint\big)$ is disjoint from $I^+\big(\gamma(y^-_0),\Mint\big)$. This implies that $\Delta_- <1$.

In the same way we deduce that there exists a $\Delta_+ \in (0,1)$ such that
\begin{equation*}
\sigma^{-1}\big[I^+\big(\gamma(y^+_0),\Mint\big)\big] = [0,\Delta_+)\;.
\end{equation*} 
In the following we show that $\Delta_+ < \Delta_-$.

Since $(\Mint,\gint)$ is globally hyperbolic and $\gint$ is smooth (!), we have $\overline{I^+\big(\gamma(y^+_0),\Mint\big)} = J^+\big(\gamma(y^+_0),\Mint\big)$.\footnote{This follows from 6.\ Lemma and 22.\ Lemma of Chapter 14 of \cite{ONeill}. Here $J^+\big(\gamma(y^+_0),\Mint\big)$ denotes the causal future of $\gamma(y^+_0)$ in $\Mint$, see also Chapter 14 of \cite{ONeill}.} Together with $\gamma(y^+_0) \in I^+\big(\gamma(y^-_0),\Mint \big)$, we now obtain\footnote{See 1.\ Corollary in Chapter 14 of \cite{ONeill}.}
\begin{equation*}
\overline{I^+\big(\gamma(y^+_0),\Mint\big)} = J^+\big(\gamma(y^+_0),\Mint\big) \subseteq J^+\Big(I^+\big(\gamma(y^-_0),\Mint\big),\Mint\Big) = I^+\big(\gamma(y^-_0),\Mint\big) \;.
\end{equation*}
Hence, we have $\sigma(\Delta_+) \in I^+\big(\gamma(y^-_0),\Mint\big)$, from which it follows that $\Delta_+ < \Delta_-$.

Choosing $s_0 \in (\Delta_+,\Delta_-)$, it follows that $\sigma(s_0) \in   I^+\big(\gamma(y^-_0),\Mint\big)  \setminus I^+\big(\gamma(y^+_0),\Mint\big)$. Moreover, it is clear that $\sigma(s_0) \in \bigcup_{-\varepsilon < s < 0} I^-\big(\gamma(s), \Mint \big)$, which concludes Step 1.2.1.
\vspace*{2mm}

\textbf{Step 1.2.2} We show that $\overline{K}$ is compact.
\vspace*{2mm}

We claim that for any $s_0 \in (y^+_0,0)$ there exists a $\delta >0$ and a neighbourhood $V$ of $\mathrm{id} \in SO(d)$ such that
\begin{equation*}
\big(\gamma_t(s) - \delta, \gamma_t(s) + \delta\big) \times \big{\{}\gamma_r(s)\big{\}} \times \big{\{}f \cdot \gamma_\omega(s) \, | \, f \in V\big{\}} \subseteq I^+\big(\gamma(y^+_0),\Mint\big)
\end{equation*}
holds for all $s \in (s_0,0)$.

In order to prove this claim, we first note that since $\gamma$ is timelike and $\gint$ is continuous, there exists a $\mu >0$ such that 
\begin{equation}
\label{UniformTimelike}
\gint\big(\dot{\gamma}(s), \dot{\gamma}(s)\big) < - \mu
\end{equation}
holds for all $s \in [y^+_0,s_0]$.

Let now $\lambda \in C^\infty\big( [y^+_0,s_0], \R\big)$ and $ h \in C^\infty\big([y^+_0,s_0], SO(d) \subseteq \mathrm{Mat}(d \times d, \R)\big)$, and define $\sigma : [y^+_0,s_0] \to \Mint$ by
\begin{equation*}
\sigma(s) := \Big(\gamma_t(s) + \lambda(s), \gamma_r(s), h(s)\big(\gamma_\omega(s)\big)\Big) \;.
\end{equation*}
We compute
\begin{equation}
\label{NormSigma}
\begin{split}
\gint\big(\dot{\sigma},\dot{\sigma}\big) = - \Big(1 - \frac{2m}{\big(\gamma_r(s)\big)^{d-2}}\Big)\, \big(\dot{\gamma}_t(s) + \dot{\lambda}(s)\big)^2 &+ \Big(1 - \frac{2m}{\big(\gamma_r(s)\big)^{d-2}}\Big)^{-1} \, \big(\dot{\gamma}_r(s)\big)^2 \\
&+ \big(\gamma_r(s)\big)^2 \, \big|\big|\dot{h}(s) \gamma_\omega(s) + h(s) \dot{\gamma}_\omega(s)\big|\big|^2_{\R^d} \;,
\end{split}
\end{equation}
where $|| \cdot ||_{\R^d}$ denotes the Euclidean norm on $\R^d$, and we think of $\gamma_\omega$ as mapping into $\Sd \subseteq \R^d$.
Since $\gamma_r(s)$ is bounded away from $0$ for $s \in [y^+_0, s_0]$, we can infer from \eqref{UniformTimelike} and \eqref{NormSigma} that there exists an $\eta >0$ such that whenever 
\begin{equation}
\label{BoundsOnSlopes}
||\dot{\lambda}||_{L^\infty\big([y^+_0,s_0]\big)} + ||\dot{h}||_{L^\infty\big([y^+_0,s_0]\big)} < \eta
\end{equation}
holds, the curve $\sigma : [y^+_0,s_0] \to \Mint$ is timelike\footnote{One can for example define $||\dot{h}||_{L^\infty\big([y^+_0,s_0]\big)} := \sup\limits_{s \in [y^+_0,s_0]} ||\dot{h}(s)||_{\R^{d\times d}}$.}. This in turn implies the existence of a $\delta >0$ and a neighbourhood $V$ of $\mathrm{id} \in SO(d)$ such that for every $\lambda_{s_0} \in (-\delta, \delta)$ and for every $h_{s_0} \in V$ there are smooth functions 
\begin{equation*}
\lambda \in C^\infty\big( [y^+_0,s_0], \R\big) \quad \textnormal{ with } \; \lambda(y^+_0) = 0 \; \textnormal{ and } \; \lambda(s_0) = \lambda_{s_0}
\end{equation*}
and
\begin{equation*}
h \in C^\infty\big([y^+_0,s_0], SO(d) \subseteq \mathrm{Mat}(d \times d, \R)\big) \quad \textnormal{ with }\; h(y^+_0) = \mathrm{id} \; \textnormal{ and } \; h(s_0) = h_{s_0}
\end{equation*}
such that moreover \eqref{BoundsOnSlopes} is satisfied. The claim now follows from concatenating $\sigma$ with the timelike\footnote{Recall that the Schwarzschild metric $\gint = - ( 1- \frac{2m}{r^{d-2}}) \, dt^2 + ( 1 - \frac{2m}{r^{d-2}})^{-1} \, dr^2 + r^2 \mathring{\gamma}_{d-1}$ is spherically symmetric and invariant under translations in $t$.} curve  $\tau : [s_0, 0) \to \Mint$ given by
\begin{equation*}
\tau(s) = \Big(\gamma_t(s) + \lambda_{s_0}, \gamma_r(s), h_{s_0} \big(\gamma_\omega(s)\big)\Big) \;.
\end{equation*}

We now fix $s_0 \in (y^+_0, 0)$ and obtain $\delta >0$ and a neighbourhood $V \subseteq SO(d)$ of $\mathrm{id} \in SO(d)$ as in the claim. It follows from Lemma \ref{BoundsOnReach} that there exists $s_1 \in (s_0 ,0)$ (close to $0$) such that
\begin{equation*}
\begin{split}
I^-\big(\gamma(s),\Mint\big) \cap \Big{\{} r \leq \gamma_r(s_1)\Big{\}} &\subseteq \bigcup_{s_0  \leq s' <0} \Big[ \big(\gamma_t(s') - \delta, \gamma_t(s') + \delta\big) \times \big{\{}\gamma_r(s')\big{\}} \times \big{\{}f \cdot \gamma_\omega(s') \, | \, f \in V\big{\}} \Big] \\
&\subseteq I^+\big(\gamma(y^+_0),\Mint\big)
\end{split}
\end{equation*}
holds for all $s \in (-\varepsilon_0, 0)$. This implies 
\begin{equation*}
K \subseteq \R \times \big(\gamma_r(s_1), \gamma_r(y^-_0)\big) \times \Sd \;.
\end{equation*}
Moreover, the bound \eqref{BoundTVel} implies that there are $t_0, t_1 \in \R$ such that 
\begin{equation*}
I^+\big(\gamma(y^-_0),\Mint\big) \subseteq (t_0,t_1) \times \big(0, \gamma_r(y^-_0)\big) \times \Sd \;.
\end{equation*}
It follows that 
\begin{equation*}
K \subseteq (t_0,t_1) \times \big(\gamma_r(s_1), \gamma_r(y^-_0)\big) \times \Sd\;,
\end{equation*}
which implies that $\overline{K}$ is compact.
\vspace*{2mm}

\textbf{Step 1.3} First note that it follows from the definition of $K$, \eqref{DefK}, that $K \subseteq \Big(\bigcup_{-\varepsilon_0 < s < 0} I^-\big(\gamma(s), \Mint \big)\Big) \cap I^+\big( \gamma(y^-_0), \Mint\big)$, and thus, together with \eqref{FuturePastPreserving2}, we obtain
\begin{equation*}
\overline{\tilde{\varphi} \big( \iota(K)\big)}  \subseteq \overline{I^-\big(0, \ed\big) \cap I^+\big(y^-, \ed\big)} \;.
\end{equation*}
By the choice of $y^- \in \ed$ in Step 1.1, together with \eqref{EstimatesOnPastAndFuture}, it follows that
\begin{equation}
\label{Neighbourhood}
\overline{\tilde{\varphi} \big( \iota(K)\big)}  \subseteq  \Big(x^+ + C^-_{\nicefrac{6}{7}}\Big) \cap \Big(x^- + C^+_{\nicefrac{6}{7}}\Big)\;.
\end{equation}
Moreover, the continuity of $\tilde{\varphi}$ and $\iota$ implies\footnote{Indeed, since $\overline{K} \subseteq \Mint$ is compact, we actually have equality.}
\begin{equation}
\label{ClosureRel}
\tilde{\varphi}\big(\iota(\overline{K}) \big) \subseteq \overline{\tilde{\varphi} \big( \iota(K)\big)} \;.
\end{equation}
Hence, from \eqref{ClosureRel} and \eqref{Neighbourhood}, it follows that 
\begin{equation*}
W:= (\tilde{\varphi} \circ \iota)^{-1}\Big(\big[x^+ + C^-_{\nicefrac{6}{7}}\big] \cap \big[x^- + C^+_{\nicefrac{6}{7}}\big]\Big) \subseteq \Mint
\end{equation*}
is an open neighbourhood of $\overline{K} \subseteq \Mint$.
\vspace*{2mm}

\textbf{Step 1.4} We show that there exists a $\mu >0$ such that $I^+\big(\gamma(-\mu), \Mint\big)$ is timelike separated from $\gamma(x^-_0)$ by $W$.
\vspace*{2mm}

We consider $\Mint = \R \times \big(0,(2m)^{\frac{1}{d-2}}\big) \times \Sd$ with the metric $d_{\Mint} : \Mint \times \Mint \to [0,\infty)$ given by
\begin{equation*}
d_{\Mint}\Big((t_1, r_1, \omega_1), (t_2, r_2, \omega_2)\Big) := |t_1 - t_2| + |r_1 - r_2| + d_{\Sd}\big(\omega_1, \omega_2\big) \;,
\end{equation*}
where $(t_i, r_i, \omega_i) \in \Mint$ for $i=1,2$. Since  
\begin{equation*}
\Mint \ni (t,r,\omega) \mapsto d_{\Mint}\big((t,r,\omega),\Mint \setminus W\big) = \inf\limits_{(t',r',\omega') \in \Mint \setminus W} d_{\Mint}\big((t,r,\omega),(t',r',\omega')\big)
\end{equation*}
is continuous, and $\overline{K}$ is compact and disjoint from the closed set $\Mint \setminus W$, we infer that $d_{\Mint}(\cdot, \Mint \setminus W)$ must attain its minimum on $\overline{K}$, which is moreover strictly positive. It follows that there exists a $\delta >0$ such that 
\begin{equation*}
\overline{K}_\delta := \Big{\{} (t,r,\omega) \in \Mint \, \big| \, d_{\Mint}\big((t,r,\omega), \overline{K}\big) < \delta \Big{\}} \subseteq W \;.
\end{equation*}
Moreover, by choosing $\delta $ slightly smaller if necessary, we can also assume that 
\begin{equation}
\label{RestrictionOnDelta}
B_\delta\big(\gamma(x^-_0)\big) \subseteq I^-\big(\gamma(y^-_0),\Mint\big)\;.
\end{equation}
\vspace*{2mm}

\textbf{Step 1.4.1}
We define a metric $d_{SO(d)} : SO(d) \times SO(d) \to [0,\infty)$ on $SO(d)$ by
\begin{equation*}
d_{SO(d)}(f,h) := \sup\limits_{\omega \in \Sd} d_{\Sd}\big(f (\omega), h(\omega)\big)\;,
\end{equation*}
where $f, h  \in SO(d)$, and denote with $B_{\eta}(\mathrm{id}) \subseteq SO(d)$ the ball of radius $\eta >0 $, centred at $\mathrm{id}$, with respect to this metric. Moreover, it is easy to see that 
\begin{equation}
\label{ChooseRotation}
\textnormal{for } \omega_0, \omega_1 \in \Sd \textnormal{ with } d_{\Sd}(\omega_0, \omega_1) < \eta \textnormal{,  there exists an } h \in B_\eta(\mathrm{id})  \textnormal{ with } h(\omega_0) = \omega_1\;. 
\end{equation}
In particular, $h$ can be defined as a rotation purely in the plane $\mathrm{span}\{\omega_0, \omega_1\} \subseteq \R^d$.
\vspace*{2mm}

Continuing the proof of Step 1.4, Lemma \ref{BoundsOnReach} implies that there exists a $\mu \in (0, -y^+_0)$ such that for all $(t_0,r_0, \omega_0) \in I^+\big(\gamma(-\mu),\Mint\big)$, we have
\begin{equation}
\label{ChoiceMu}
|t_0 - \gamma_t(-\mu)| < \frac{\delta}{2} \qquad \textnormal{ and } \qquad d_{\Sd}\big(\omega_0, \gamma_\omega(-\mu)\big) < \frac{\delta}{2} \;.
\end{equation}
In the following we will show that $I^+\big(\gamma(-\mu),\Mint\big)$ is timelike separated from $\gamma(x^-_0)$ even by $\overline{K}_\delta$ (which, of course, implies Step 1.4).

So let $\sigma : [0,1] \to \Mint$ be a past directed timelike curve with $\sigma(0) \in I^+\big(\gamma(-\mu),\Mint\big)$ and $\sigma (1) = \gamma(x^-_0)$. Also let $s_0 \in (-\mu ,0)$ be such that $\gamma_r(s_0) = \sigma_r (0)$.
 By \eqref{ChoiceMu} we have 
 \begin{equation*}
 |\sigma_t(0) - \gamma_t(s_0)| < \delta \qquad \textnormal{ and } \qquad d_{\Sd}\big(\sigma_\omega(0), \gamma_\omega(s_0)\big) < \delta \;.
 \end{equation*}
Thus, by \eqref{ChooseRotation} there exists an $h \in B_\delta(\mathrm{id}) \subseteq SO(d)$ such that $h\big(\sigma_\omega(0)\big) = \gamma_\omega(s_0)$. It now follows that the curve $\hat{\sigma} : [0,1] \to \Mint$, given by 
\begin{equation*}
\hat{\sigma}(s) = \Big(\sigma_t(s) + [\gamma_t(s_0) - \sigma_t(0)], \sigma_r(s), h\big(\sigma_\omega(s)\big)\Big)\;,
\end{equation*}
is past directed timelike with $\hat{\sigma}(0) = \gamma(s_0)$ and, using the fact that $h \in B_\delta(\mathrm{id})$ together with \eqref{RestrictionOnDelta},  $\hat{\sigma}(1) \in I^-\big(\gamma(y^-_0),\Mint\big)$. By Step 1.2.1, there exists an $\hat{s} \in [0,1]$ with $\hat{\sigma}(\hat{s}) \in K$. It now follows that $\sigma(\hat{s}) \in \overline{K}_\delta$, which concludes Step 1.4.
\vspace*{2mm}

We now finish the proof of Step 1. We first show that $ \iota \big(I^+(\gamma(-\mu), \Mint)\big) \subseteq  \tilde{\varphi}^{-1}\Big(\big[x^+ + C^-_{\nicefrac{6}{7}}\big] \cap \big[x^- + C^+_{\nicefrac{6}{7}}\big]\Big)$.  

The proof is by contradiction. So let $\sigma : [0,1] \to \Mint$ be a future directed timelike curve with $\sigma(0) = \gamma(-\mu)$ and assume that there exists $\tilde{s} \in [0,1]$ such that $(\tilde{\varphi} \circ \iota \circ \sigma)(\tilde{s}) \notin \big[x^+ + C^-_{\nicefrac{6}{7}}\big] \cap \big[x^- + C^+_{\nicefrac{6}{7}}\big]$. Let
\begin{equation*}
s_0 := \sup \big{\{}s' \in [0,1] \, | \, (\tilde{\varphi} \circ \iota \circ \sigma)(s) \in \big[x^+ + C^-_{\nicefrac{6}{7}}\big] \cap \big[x^- + C^+_{\nicefrac{6}{7}}\big] \textnormal{ for all } s \in [0,s')\big{\}} \;.
\end{equation*}
Clearly, we have $0 < s_0 \leq 1$ and from our assumption it follows that $(\tilde{\varphi} \circ \iota \circ \sigma)(s_0)  \in \partial \big(\big[x^+ + C^-_{\nicefrac{6}{7}}\big] \cap \big[x^- + C^+_{\nicefrac{6}{7}}\big]\big)$. 
Since all vectors in $C^-_{\nicefrac{5}{6}}$ are past directed timelike, we can find a past directed timelike curve $\tau : [0,1] \to \ed$ with $\tau(0) = (\tilde{\varphi} \circ \iota \circ \sigma) (s_0)$ and $\tau(1) = x^-$, which does not intersect $\big[x^+ + C^-_{\nicefrac{6}{7}}\big] \cap \big[x^- + C^+_{\nicefrac{6}{7}}\big]$. For example, this curve can be chosen to lie in $\partial \Big(\big[x^+ + C^-_{\nicefrac{6}{7}}\big] \cap \big[x^- + C^+_{\nicefrac{6}{7}}\big]\Big)$, see Figure \ref{FigCon}.
\begin{figure}[h]
\centering
\def\svgwidth{9cm}
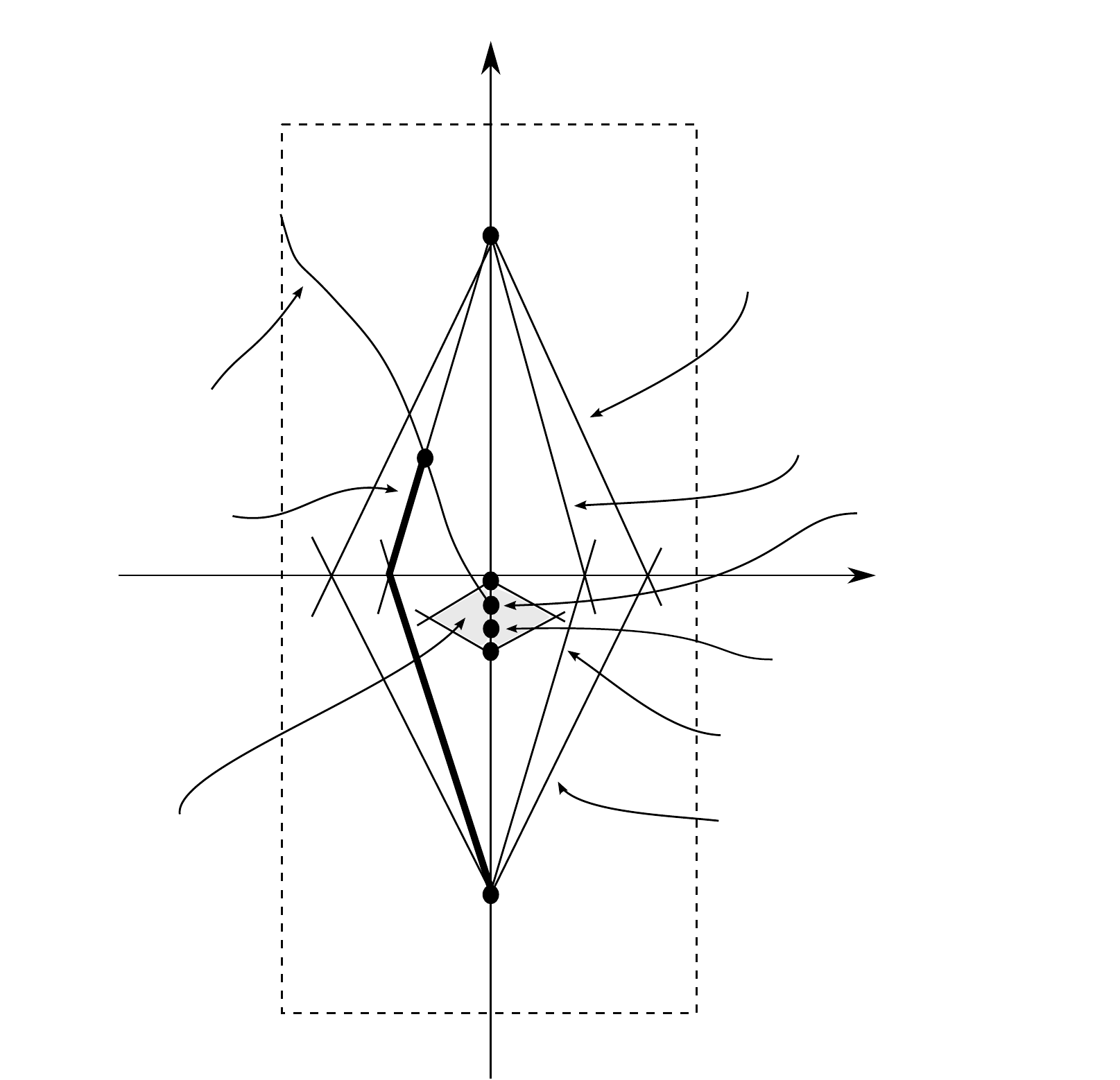
\caption{The proof of Step 1 -- by contradiction} \label{FigCon}
\end{figure}
The first point of Remark \ref{RemBoundary} implies that $\tau$ maps in $\tilde{\varphi}\big(\iota(\Mint) \cap \tilde{U}\big)$. Hence, $(\tilde{\varphi} \circ \iota)^{-1} \circ \tau$ is a past directed timelike curve in $\Mint$ with $\big((\tilde{\varphi} \circ \iota)^{-1} \circ \tau\big)(0) = \sigma(s_0) \in I^+\big(\gamma(-\mu),\Mint\big)$ and $\big((\tilde{\varphi} \circ \iota)^{-1} \circ \tau\big)(1) =\gamma(x^-_0)$, which does not intersect $W= (\tilde{\varphi} \circ \iota)^{-1}\Big(\big[x^+ + C^-_{\nicefrac{6}{7}}\big] \cap \big[x^- + C^+_{\nicefrac{6}{7}}\big]\Big)$. This, however, is a contradiction to Step 1.4. Hence, we have shown that $ \iota \big(I^+(\gamma(-\mu), \Mint)\big) \subseteq  \tilde{\varphi}^{-1} \Big(\big[x^+ + C^-_{\nicefrac{6}{7}}\big] \cap \big[x^- + C^+_{\nicefrac{6}{7}}\big]\Big)$, which in particular implies the first point of Step 1. The second point follows from \eqref{EqBottomChart} together with $y_0^- < y_0^+ < -\mu < 0$. This concludes the proof of Step 1.
\vspace*{2mm}

\underline{\textbf{Step 2:}} We show that Step 1 implies the following: There exists a constant $0<C_d < \infty$ such that for any Cauchy hypersurface $\Sigma$ of $\Mint$ the \underline{distance in $\Sigma$} of any two points in $I^+\big(\gamma(-\mu), \Mint\big) \cap \Sigma$ is bounded by $C_d$.

Explicitly, this means that for all $p,q \in I^+\big(\gamma(-\mu), \Mint\big) \cap \Sigma$ we have 
\begin{equation*}
d_{\Sigma}(p,q) := \inf_{\substack{\gamma : [0,1] \to \Sigma \textnormal{ piecewise smooth} \\\gamma(0) = p \textnormal{ and } \gamma(1) = q }}   \Big{\{} \int_0^1 \sqrt{ \overline{g}\big(\dot{\gamma}(s), \dot{\gamma}(s)\big)} \, ds\Big{\}} \leq C_d\;,
\end{equation*}
where $\overline{g}$ is the induced metric on $\Sigma$.
\vspace*{5mm}

Let $\Sigma$ be a Cauchy hypersurface of $\Mint$ and assume that $\gamma(-\mu) \in I^-(\Sigma, \Mint)$ -- otherwise there is nothing to show. For all $\underline{x} \in (-\varepsilon_1, \varepsilon_1)^d$ the future directed timelike curves 
\begin{equation*}
\sigma_{\underline{x}} : \big(-\varepsilon_0, f(\underline{x})\big) \to \Mint\;, \qquad \qquad \sigma_{\underline{x}}(s) = (\iota^{-1} \circ \tilde{\varphi}^{-1})(s,\underline{x})
\end{equation*}
are future inextendible in $\Mint$. Moreover, by the second point of Step 1 they start in $I^-\big(\gamma(-\mu), \Mint\big) \subseteq I^-(\Sigma, \Mint)$. It now follows from $\Sigma$ being a Cauchy hypersurface of $\Mint$ that for each $\underline{x} \in (-\varepsilon_1, \varepsilon_1)^d$ the curve  $\sigma_{\underline{x}}$ intersect $\Sigma$ exactly once - say at $s = h(\underline{x})$. This defines a function $h :  (-\varepsilon_1, \varepsilon_1)^d \to (-\varepsilon_0, \varepsilon_0)$.
\vspace*{2mm}

\textbf{Step 2.1:} We show that $h : (-\varepsilon_1, \varepsilon_1)^d \to (-\varepsilon_0, \varepsilon_0)$ is smooth.
\vspace*{2mm}

Let $\ux_0 \in (-\varepsilon_1, \varepsilon_1)^d$. Since $\tilde{\varphi}\big(\iota(\Sigma) \cap \tilde{U}\big)$ is a smooth submanifold of $\ed$, there exists a smooth submersion $u : W \to \R$, where $W \subseteq \ed$ is an open neighbourhood of $\big(h(\ux_0), \ux_0\big)$, such that $\tilde{\varphi}\big(\iota(\Sigma) \cap \tilde{U}\big) \cap W = \{ u=0\}$. Since $\Sigma$ is a Cauchy hypersurface, the timelike vector field $\partial_0$ can be nowhere tangent to $\tilde{\varphi}\big(\iota(\Sigma) \cap \tilde{U}\big)$. It thus follows that $\partial_0 u|_{\big(h(\ux_0), \ux_0\big)} \neq 0$. By the implicit function theorem, there is now a smooth function $v : V \to \R$, where $V \subseteq (-\varepsilon_1, \varepsilon_1)^d$ is an open neighbourhood of $\ux_0$, such that $u\big(v(\ux),\ux\big) =0$. Thus, we must have $h|_V = v$ -- and hence, $h$ is smooth.
\vspace*{2mm}

\textbf{Step 2.2:} We show that there exists a $0< C_\mathrm{slope} < \infty$ such that $|\partial_i h(\ux)| \leq C_\mathrm{slope} $ holds for all $\ux \in (-\varepsilon_1, \varepsilon_1)^d$  and for all $i = 1, \ldots, d$.
\vspace*{2mm}

Since the tangent space of a Cauchy hypersurface does not contain timelike vectors, we obtain the inequality
\begin{equation}
\label{Slope}
0 \leq \tilde{g}\Big((\partial_i h) \partial_0 + \partial_i, (\partial_i h) \partial_0 + \partial_i\Big) = (\partial_i h)^2 \tilde{g}_{00} + 2 (\partial_i h) \tilde{g}_{0i} + \tilde{g}_{ii} 
\end{equation}
for all $i = 1, \ldots, d$.
Equality in \eqref{Slope} holds for  
\begin{equation*}
(\partial_i h)_\pm =  \frac{-\,\tilde{g}_{0i} \mp \sqrt{(\tilde{g}_{0i})^2 - \tilde{g}_{ii} \tilde{g}_{00}}}{\tilde{g}_{00}} \;.
\end{equation*}
Recall that $\tilde{g}_{00} < -1 + \delta_0 < 0$ and $|\tilde{g}_{\mu \nu}| \leq 1 + \delta_0$. Hence, we obtain the uniform bound
\begin{equation*}
\max\Big{\{}\big{|}(\partial_i h)_+\big{|} \, , \,\big{|} (\partial_i h)_-\big{|} \Big{\}} \leq C_\mathrm{slope}  \;,
\end{equation*}
where $0< C_\mathrm{slope}  < \infty$ is a constant depending on $\delta_0$ (but not on $h$).
Moreover, together with $\tilde{g}_{00} <0$, the inequality \eqref{Slope} implies
\begin{equation*}
(\partial_i h)_- \leq (\partial_i h) \leq (\partial_i h)_+ 
\end{equation*}
and thus $|\partial_i h| \leq C_\mathrm{slope} $ for all $i = 1, \ldots, d$.
\vspace*{2mm}

We now define $\omega : (-\varepsilon_1, \varepsilon_1)^d \to \ed$ by $\omega(\underline{x}) = \big(h(\underline{x}), \underline{x}\big)$. Clearly, $\omega$ parametrises a smooth submanifold $\tilde{S}$ of $\ed$ which, via $\iota^{-1} \circ \tilde{\varphi}^{-1}$, is isometric to an open subset $S$ of $\Sigma \subseteq \Mint$. The ambient metric $\tilde{g}$ on $\ed$ induces a metric $\overline{g}$ on $\tilde{S}$.
Note that this metric is not necessarily positive definite. Its components with respect to the chart $\omega^{-1} $ are denoted by $\overline{g}_{ij}$, where $i,j \in \{1, \ldots, d\}$.
\vspace*{2mm}

\textbf{Step 2.3:} We show that there exists a constant $0 < C_{\overline{g}} < \infty$ such that the uniform bound $|\overline{g}_{ij}(\ux)| \leq C_{\overline{g}}$ holds for all $\ux \in (-\varepsilon_1, \varepsilon_1)^d$ and for all $i,j = 1, \ldots, d$.
\vspace*{2mm}

We compute
\begin{equation*}
\overline{g}_{ij} = \big(\omega^*\tilde{g}\big)_{ij} = \tilde{g}_{\mu \nu} \frac{\partial \omega^\mu}{\partial x_i} \frac{\partial \omega^\nu}{\partial x_j} = \tilde{g}_{00} \frac{\partial h}{\partial x_i} \frac{\partial h}{\partial x_j} + \tilde{g}_{0j}\frac{\partial h}{\partial x_i} + + \tilde{g}_{i0}\frac{\partial h}{\partial x_j} + \tilde{g}_{ij} \;.
\end{equation*}
It now follows from $|\tilde{g}_{\mu \nu}| \leq 1 + \delta_0$ and Step 2.2 that there exists a constant $0 < C_{\overline{g}} < \infty$ such that $|\overline{g}_{ij}| \leq C_{\overline{g}}$ holds for all  $i,j = 1, \ldots, d$.
\vspace*{2mm}

We can now finish Step 2. Let $p,q \in I^+\big(\gamma(-\mu), \Mint\big) \cap \Sigma$ be given. By property 1 of Step 1 there exists $\underline{x}, \underline{y} \in (-\varepsilon_1, \varepsilon_1)^d$ with $\omega(\underline{x}) = \tilde{\varphi} \big( \iota(p)\big)$ and  $\omega(\underline{y}) = \tilde{\varphi} \big( \iota(q)\big)$.
We connect the points $\underline{x}$ and $\underline{y}$ by a straight line in $(-\varepsilon_1, \varepsilon_1)^d$, i.e., by $\sigma : [0,1] \to (-\varepsilon_1, \varepsilon_1)^d$, where $\sigma(s) = \ux + s(\underline{y} - \ux)$. It follows that
\begin{equation*}
\begin{split}
L(\sigma) &= \int_0^1 \sqrt{\overline{g}\big( \dot{\sigma}(s), \dot{\sigma}(s)\big)} \, ds \\
&= \int_0^1 \sqrt{\Big(\sum_{i, j =1}^d (\underline{y}_i - \ux_i) \overline{g}_{ij}\big(\sigma(s)\big)(\underline{y}_j - \ux_j) \Big)}\,ds \\
&\leq \int_0^1 \sqrt{ \Big( \sum_{i, j =1}^d  \sqrt{d}\varepsilon_1 \cdot C_{\overline{g}} \cdot \sqrt{d}\varepsilon_1\Big)} \, ds \\
&= \varepsilon_1 d^{\nicefrac{3}{2}} \sqrt{C_{\overline{g}}} \;.
\end{split}
\end{equation*}
Note that this bound does not depend on the points $\ux, \underline{y} \in (-\varepsilon_1, \varepsilon_1)^d$. It follows that $\iota^{-1} \circ \tilde{\varphi}^{-1} \circ \omega \circ \sigma$ is a smooth curve in $\Sigma$ that connects $p$ and $q$ and the length of which is less or equal to $\varepsilon_1 d^{\nicefrac{3}{2}} \sqrt{C_{\overline{g}}}$. Since
\begin{equation*}
\begin{split}
d_{\Sigma}(p,q) &= \inf_{\substack{\gamma : [0,1] \to \Sigma \textnormal{ piecewise smooth} \\\gamma(0) = p \textnormal{ and } \gamma(1) = q }}   \Big{\{} \int_0^1 \sqrt{ \overline{g}\big(\dot{\gamma}(s), \dot{\gamma}(s)\big)} \, ds\Big{\}} \\ 
&\leq \inf_{\substack{\gamma : [0,1] \to S \textnormal{ piecewise smooth} \\\gamma(0) = p \textnormal{ and } \gamma(1) = q }}   \Big{\{} \int_0^1 \sqrt{ \overline{g}\big(\dot{\gamma}(s), \dot{\gamma}(s)\big)} \, ds\Big{\}}\\ 
&\leq \varepsilon_1 d^{\nicefrac{3}{2}} \sqrt{C_{\overline{g}}} \;,
\end{split}
\end{equation*}
this concludes Step 2 with $C_d = \varepsilon_1 d^{\nicefrac{3}{2}} \sqrt{C_{\overline{g}}}$.
\vspace*{2mm}

\underline{\textbf{Step 3:}} We show that the result of Step 2 is in contradiction with the geometry of $(\Mint, \gint)$.
\vspace*{2mm}

Let $\omega_0$ be the projection of $\gamma(-\frac{\mu}{2})$ to $\Sd$, $t_0 := t\big(\gamma(-\frac{\mu}{2})\big)$, and $r_0 := r\big(\gamma(-\frac{\mu}{2})\big)$. By the openness of $I^+\big(\gamma(-\mu), \Mint\big)$ there exists a $\lambda >0$ such that $(t_0 - \lambda, t_0 + \lambda) \times {r_0} \times \omega_0 \subseteq I^+\big(\gamma(-\mu), \Mint\big)$. Since $-\partial_r$ is future directed timelike, we also have
\begin{equation*}
[t_0 - \lambda, t_0 + \lambda] \times (0,r_0) \times \omega_0 \subseteq I^+\big(\gamma(-\mu), \Mint\big) \;.
\end{equation*}
Next, we consider the family of Cauchy hypersurfaces $\Sigma_n := \{r = \frac{1}{n}\}$, where $n \in \N_{>n_0}$ for some large $n_0 \in \N$, together with the family of pairs of points
\begin{equation*}
p_n := (t_0 - \lambda, \frac{1}{n}, \omega_0) \in I^+\big(\gamma(-\mu), \Mint\big) \qquad \textnormal{ and } \qquad q_n := (t_0 + \lambda, \frac{1}{n}, \omega_0) \in I^+\big(\gamma(-\mu), \Mint\big) \;.
\end{equation*}
The induced metric $\overline{g}_n$ on $\Sigma_n$ is given by
\begin{equation*}
\overline{g}_n = -\big(1 - 2m\cdot n^{(d-2)}\big) \,dt^2 + n^{-2}\, \mathring{\gamma}_{d-1} \;.
\end{equation*}
It is easy to see that the shortest curve connecting $p_n$ with $q_n$ in $\Sigma_n$ is given by
 $\gamma_n : (-\lambda, \lambda) \to \Sigma_n$, 
\begin{equation*}
\gamma_n (s) = ( t_0 + s, \frac{1}{n}, \omega_0) \;.
\end{equation*}
The length $L(\gamma_n)$ of $\gamma_n$ is given by
\begin{equation*}
\begin{split}
L(\gamma_n) &= \int_{-\lambda}^\lambda \sqrt{\overline{g}_n \big(\dot{\gamma}_n(s), \dot{\gamma}_n(s)\big)} \, ds \\
&= \int_{-\lambda}^\lambda \sqrt{2m \cdot n^{(d-2)} -1} \, ds \\ 
&= 2 \lambda \sqrt{2m \cdot n^{(d-2)} -1} \;.
\end{split}
\end{equation*}
Since $d \geq 3$, it thus follows that
\begin{equation*}
d_{\Sigma_n}(p_n, q_n)= 2\lambda \sqrt{2m \cdot n^{(d-2)} -1} \to \infty \quad \textnormal{ for } n \to \infty \;,
\end{equation*}
in contradiction to Step 2.
This concludes the proof of Theorem \ref{MainThm}.
\end{proof}

\bibliographystyle{acm}
\bibliography{Bibly}

\end{document}